\journal{Computer Networks}
\newcommand{\nop}[1]{} %JYM: This new commend allows to comment text within "{}" in "\nop{}".
\newtheorem{definition}{Definition}
\newtheorem{lemma}{Lemma}
\newtheorem{example}{Example}
\newcommand{\signed}%
{{\unskip\nobreak\hfill\penalty50
    \hskip2em\hbox{}\nobreak\hfil $\blacksquare$
    \parfillskip=0pt \finalhyphendemerits=0 \par}}
\newenvironment{proof}[1]
{
  \noindent\bf{Proof: }\rm{#1}\ignorespaces
}
{\signed\addvspace\medskipamount}
\begin{document}
	
	\pagestyle{empty}

\begin{frontmatter}

\title{Local Voting: A New Distributed Bandwidth Reservation Algorithm for
       6TiSCH Networks}
%\tnotetext[mytitlenote]{Fully documented templates are available in the elsarticle package on \href{http://www.ctan.org/tex-archive/macros/latex/contrib/elsarticle}{CTAN}.}

%% Group authors per affiliation:
%\author{Elsevier\fnref{myfootnote}}
%\address{Radarweg 29, Amsterdam}
%\fntext[myfootnote]{Since 1880.}

%\author{\fnref{myfootnote}}

%% or include affiliations in footnotes:
%\author[mymainaddress,mysecondaryaddress]{Elsevier Inc}
%\ead[url]{www.elsevier.com}

%\author[mysecondaryaddress]{Global Customer Service\corref{mycorrespondingauthor}}
%\cortext[mycorrespondingauthor]{Corresponding author}
%\ead{support@elsevier.com}

%\address[mymainaddress]{1600 John F Kennedy Boulevard, Philadelphia}
%\address[mysecondaryaddress]{360 Park Avenue South, New York}

\author[kastoria]{Dimitrios~J.~Vergados\corref{mycorrespondingauthor}}
\cortext[mycorrespondingauthor]{Corresponding author}
\ead{dvergados@uowm.gr}
\author[ntnu]{Katina Kralevska}
\ead{katinak@ntnu.no}
\author[ntnu]{Yuming~Jiang}
\ead{jiang@ntnu.no}
\author[kastoria]{Angelos~Michalas}
\ead{amichalas@uowm.gr}

\address[kastoria]{Dep.\ of~Informatics, University~of~Western~Macedonia,
                   Kastoria, Greece}
\address[ntnu]{Dep.\ of Information Security and Communication Technology, NTNU,
               Norwegian~University~of~Science~and~Technology}

\begin{abstract}
The IETF 6TiSCH working group fosters the adaptation of IPv6-based protocols
into Internet of Things by introducing the 6TiSCH Operation Sublayer (6top).
The 6TiSCH architecture integrates the high reliability and low-energy
consumption of IEEE~802.15.4e Time Slotted Channel Hopping (TSCH) with IPv6.
IEEE~802.15.4e TSCH defines only the communication between nodes through a
schedule but it does not specify how the resources are allocated for
communication between the nodes in 6TiSCH networks.
We propose a distributed algorithm for bandwidth allocation, called Local
Voting, that adapts the schedule to the network conditions.
The algorithm tries to equalize the link load (defined as the ratio of the
queue length plus the new packet arrivals, over the number of allocated cells)
through cell reallocation by
calculating the number of cells to be added or released by 6top.
Simulation results show that equalizing the load throughout 6TiSCH network
provides better fairness in terms of load, reduces the queue sizes and packets
reach the root faster compared to representative algorithms from the literature.
Local Voting combines good delay performance and energy efficiency that are
crucial features for Industrial Internet-of-Things applications.
\end{abstract}

\begin{keyword}
IoT\sep IEEE 802.15.4e\sep 6TiSCH\sep networks\sep TSCH\sep 6top\sep
Load~balancing\sep Resource allocation
\end{keyword}

\end{frontmatter}

%\linenumbers
\nolinenumbers

\section{Introduction}
Wireless Sensor Networks (WSNs) have advanced significantly in the past decades.
The recent increase of connected devices has triggered countless
Internet-of-Things (IoT) applications to emerge~\cite{ATZORI20102787}.
It is expected that 50 billion devices will be connected to the Internet by
2020~\cite{cisco}.
The so-called Industrial Internet-of-Things (IIoT) is modernizing various
domains such as home automation, transportation, manufacturing, agriculture, and
other industrial sectors.

Often IoT is realized through Low-power and Lossy Networks (LLNs), which
consist of low complexity resource constrained embedded devices, that are
interconnected using different wireless technologies.
The IEEE 802.15.4e standard defines the physical and the medium access control
(MAC) layers for ultra-low power and reliable networking solutions for
LLNs~\cite{series/asc/GuglielmoAS14}.
There are five MAC modes: Time Slotted Channel Hopping (TSCH), Deterministic
and Synchronous Multi-channel Extension (DSME), Low Latency Deterministic
Network (LLDN), Asynchronous Multi-Channel Adaptation (AMCA), and Radio
Frequency Identification Blink (BLINK)~\cite{rfc7554}.
In this work, we study TSCH which is designed to allow IEEE 802.15.4 devices to
support a wide range of applications, including industrial ones.
In industrial environments, the large metallic equipment causes multi-path
fading and interference~\cite{rfc}, and TSCH combats against them by combining
channel hopping and time synchronization.
The channel hopping allows transmissions between nodes to use different
channels, while the slotted access enhances the reliability by synchronizing
the nodes with a schedule and, thus, avoiding collisions.

The IETF 6TiSCH working group standardizes the protocol stack for
IIoT~\cite{wang-6tisch-6top-sublayer-04}.
It combines the high reliability and the low-energy consumption of
IEEE 802.15.4e TSCH with the addressability and Internet integration
capabilities of Internet Protocol version 6 (IPv6).
The communication in a 6TiSCH network is orchestrated by a schedule composed of
cells, where each cell is identified by [slotOffset, channelOffset]~%
\cite{ietf-6tisch-6top-protocol-12}.
The schedule specifies the channel (based on the
channelOffset) and the time slot (based on the slotOffset) for communication of
a node with each of its neighbors.
The IEEE 802.15.4e standard defines how the schedule is executed but it does
not define how the schedule is built and updated.
Fig.~\ref{layer} shows the 6TiSCH protocol stack where the 6TiSCH Operation
Sublayer (6top) integrates the IEEE 802.15.4e MAC-TSCH layer with the
IPv6-enabled upper stack~\cite{ietf-6tisch-6top-protocol-12}.
The roles of the 6top sublayer are:
\begin{itemize}
   \item to terminate the 6top Protocol (6P), which allows a node to communicate
         with a neighboring node to add/delete cells;
   \item to run one or multiple 6top scheduling functions (SF), which define
         the rules when to add/delete cells between neighboring nodes while
         monitoring performance and collecting statistics.
\end{itemize}

\begin{figure}
  \centering
  \includegraphics[width=0.8\textwidth]{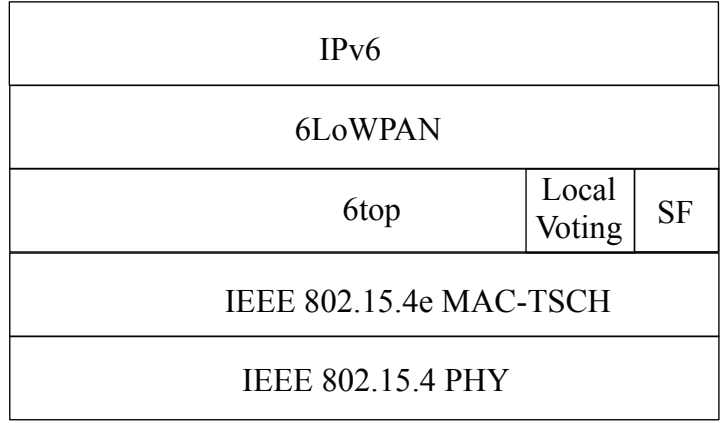}
  %\vspace{-0.1cm}
  \caption{The 6TiSCH protocol stack. We propose an algorithm for bandwidth
           reservation, called Local Voting, that is located in the 6top
           sublayer.}
  %\vspace{-0.5cm}
  \label{layer}
\end{figure}

The biggest challenges for enabling the pervasive deployment of IoT devices are
the demand for high reliability and the limited energy supply for the nodes.
These challenges are magnified with the increase of the number of network
devices and the emergence of new applications with diverse requirements.
As the deployment cases become more dense, and new applications and devices
are added, the traffic patterns become more congested.
In these conditions, we have found that the network performance is determined
by the ability of the network to distribute the resources (cells) among the
competing links, in a way that maximizes efficiency~\cite{8091101,8116537}.

In this paper, we propose a distributed bandwidth reservation algorithm
called~\emph{Local Voting} (LV).
It balances the load between the links in the network, where the load is
defined as the ratio of the queue length plus new packet arrvals,
over the number of allocated cells.
LV was originally proposed in~\cite{8091101} in the context of wireless mesh
networks.
Here we adapt the above algorithm for the link-based multi-channel
environment of 6TiSCH networks.
% Preliminary results of LV performance in 6TiSCH networks are presented
% in~\cite{8480306}.
Through analysis and extensive performance evaluation we show here that by
redistributing cells
among the links, we can limit the maximum delay in the network, and at the
same time enhance reliability and fairness at a lower energy cost
compared to scenarios where no load balancing takes place.

Most of the related works focus on ways to construct an optimal schedule
between the links, without taking into consideration the optimal
number of cells that should be allocated to each link.
These works usually consider the On-The-Fly (OTF) bandwidth reservation
algorithm~\cite{7273816}.
Using the OTF algorithm, each node in the network estimates the number
of cells that it requires for fulfilling its communication requirements
by estimating the amount of new and forwarded traffic that it needs to
transmit to its parent nodes.
Then, the OTF module asks the 6top sublayer to add or remove cells, in order
for the allocation to match this number if possible.
However, the OTF algorithm does not consider the case where the requested
number of cells exceeds the number of available cells, due to congestion.
The nodes under OTF also do not consider the traffic requirements of the
neighboring nodes, so there is no provision for cell redistribution to
neighbors with higher bandwidth demands.
Finally, the OTF algorithm tries to maintain a stable schedule by using
a long-term average of the estimated throughput, which leads to inefficient
allocation when the traffic patterns fluctuate.
For these reasons we introduce LV algorithm that addresses the
above limitations of OTF\@.
We compare a thorough performance comparison between two versions of LV and OTF
and Enhanced-OTF (E-OTF)~\cite{8449793}.
Under LV, information about the queue lengths and the cell allocations are
periodically diffused among neighboring nodes, which use this information to
calculate the number of cells that should be allocated to each link based on
the load of each interfering link.
Equalizing the load throughout the congested areas in the network leads to
better fairness in terms of load for LV compared to OTF and E-OTF\@.
The performance evaluation also shows that LV provides similar performance in
terms of delay to E-OTF with an energy consumption similar to OTF, making LV a
promising distributed bandwidth reservation algorithm for 6TiSCH networks.

The rest of the paper is organized as follows. Section~\ref{rw} summarizes the
related works.
In Section~\ref{model}, the network model is formulated.
Section~\ref{lv} presents Local Voting algorithm.
Extensive performance evaluation results are presented in Section~\ref{perf},
and Section~\ref{conc} concludes the paper.

\section{Related Works} \label{rw}

\subsection{6TiSCH scheduling protocols}

Recently there have been many proposals for centralized and distributed
solutions for TSCH scheduling in the literature.
Centralized algorithms designate a specific scheduling entity that collects
information about the network and adjusts the TSCH schedule to it.
The first proposed centralized algorithm is Traffic Aware Scheduling Algorithm
(TASA)~\cite{6362805}, which builds a time/frequency collision-free schedule
in a centralized manner.
A master node collects information about the entire network topology and the
load of each node.
Then, it computes the schedule by exploiting matching and coloring procedures.
The main disadvantages of centralized scheduling techniques is the signaling
overhead since each node in the network has to communicate with the scheduler,
there is a single point of failure, and there is a limit on the
size of the topology since the scheduler becomes a bottleneck of the scheduling
function.

As a counterpart, distributed approaches have been proposed where nodes agree
on the schedule by applying distributed scheduling protocols and
neighbor-to-neighbor negotiation, without having a central entity.
The first distributed scheduling algorithm was proposed in~\cite{Tinka2010},
and it has been followed later by numerous algorithms.
Decentralized traffic aware scheduling (DeTAS)~\cite{7254107} uses a
hierarchical
approach where all nodes follow a macro schedule that is a combination of
micro-schedules for each routing graph.
Orchestra~\cite{Duquennoy:2015:ORM:2809695.2809714} is the first algorithm
towards
autonomous scheduled TSCH where nodes compute their own schedule locally
and autonomously based on the routing layer information.

At the MAC layer, decentralized scheduling results in cell overlapping
and thus in many collisions.
A collision occurs when the same cell is allocated to different pairs in
the same interference range.
In order to avoid cell overlapping and reduce internal interference,
Decentralized Broadcast-based Scheduling algorithm
(DeBraS)~\cite{Municio:2016:DBS:2980137.2980143} allows nodes to share
scheduling information.
The collision reduction and throughput improvement by DeBraS for
dense networks come at the cost of higher energy consumption.

The algorithm proposed in~\cite{Hwang:2017:DSA:3066575.3066845} allows
every sensor node to compute its time-slot schedule in a distributed manner.
A scheme called Reliable, Efficient, Fair and Interference-Aware Congestion
Control
(REFIACC)~\cite{Kafi20171}, takes into account the heterogeneity in link
interference and capacity when constructing the scheduling send policy
in order to reach the maximum fair throughput in wireless sensor networks.
The authors in~\cite{7478619} proposed a "housekeeping" mechanism which
detects scheduled collisions and reallocates each colliding cell to a
different position in the schedule.
A distributed cell-selection algorithm for reducing scheduling errors
and collisions is proposed in~\cite{Duy201780}.
It considers a fixed queue length, thus, the algorithm cannot adapt
to the network conditions.

Wave~\cite{Soua:2016:WDS:3035337.3035346} builds the schedule by constructing
a series of waves in order to minimize the delay in convergecast applications.
All successive waves are copies of the first wave, where slots without
scheduled transmissions are removed.
An extension of Wave, where subsequent waves overlap, has been presented
in~\cite{conf/im/SouaML15}.

Recently, Decentralized Adaptive Multi-hop scheduling for 6TiSCH Networks
(DeAMON)~\cite{8017573} and Recurrent Low-Latency Scheduling Function
(ReSF)~\cite{DANEELS2018100} have been proposed.
ReSF minimizes the latency by reserving minimal-latency paths from the source to
the sink, and it only activates these paths when recurrent traffic is expected.
This results into a latency improvement of up to 80\% compared to
state-of-the-art low-latency scheduling functions. This improvement comes on
the cost of an increased power consumption.

\subsection{6TiSCH bandwidth reservation algorithms}

On-the-fly (OTF)~\cite{7273816} is a distributed algorithm that dynamically
adapts the bandwidth allocation by calculating the number of cells to be
added or removed according to a neighbor-specific threshold.
OTF is prone to schedule collisions since nodes might not be aware of
which cells are allocated to other pairs of nodes.

The authors in~\cite{8449793} assess the performance of OTF in terms of
reliability and latency.
In their assessment they focus on the impact of the network dynamics on the OTF
performance, namely the routing protocol and the 6top negotiations.
Based on the their analysis, they propose Enhanced-OTF (E-OTF) which improves
the OTF performance by modifying the allocation algorithm in OTF\@.
First, E-OTF considers the channel quality in the computation of resources by
introducing a measure for the average number of required retransmissions for a
successful packet transmission.
Second, it includes a mechanism to recover from congestion by taking into
consideration the amount of queued data.
However, E-OTF does not consider the energy consumption, that is one of the
main requirements for efficient resource allocation algorithms for IoT devices.

Scheduling Function Zero (SF0)~\cite{dujovne-6tisch-6top-sf0-01} adapts
dynamically the number of reserved cells between neighboring nodes based
on the application's bandwidth requirements and the network conditions.
SF0 uses Packet Delivery Rate (PDR) statistics to reallocate cells when
the PDR of one or more cells is much lower than the average PDR\@.
Cost-aware cell relocation (CCR)~\cite{doi:10.1002/ett.3211} complements
SF0 by detecting scheduling collisions and relocating the involved cell.
To detect collisions, CCR compares the PDR of all cells to a particular
neighbor.
If one cell has a PDR significantly lower than the PDR of other cells, then
there is a schedule collision and that cell is relocated to a different
slot/channel in the TSCH schedule.

References~\cite{DeGuglielmo20161,TELESHERMETO201784,KHARB201959} provide an
extended literature overview of scheduling algorithms in IEEE 802.15.4e.

In contrast to the related works, this paper introduces congestion control
to the scheduling algorithm, in a way that leads to optimal performance in
terms of delay.
Specifically, the local voting mechanism is used for
determining how many cells should be allocated to each link, not only
based on its own traffic requirements, as other schemes already do,
but also considering the traffic requirements and allocations of the neighboring
(conflicting) links,
This algorithms allocates the cells in a way that minimizes the maximum
value of the ratio of queue length over number of cells (the load).
Since the delay per link is given by this ratio, the allocation ensures
the minimum delay per link under congestion.

\section{Network Model and Problem Formulation} \label{model}

Our model considers a 6TiSCH network which has built a tree routing topology
with one or multiple parents per node, using the Routing Protocol for Low-Power
and Lossy Networks (RPL)~\cite{rfc6550}.
For reader's convenience, Table~\ref{notations} summarizes the notation
used throughout this paper.

  \begin{table}
      %\vspace{-0.2cm}
    \begin{center}
            \caption{List of notations}
            \label{notations}

      \begin{tabular}{|l|p{90mm}|}%{ |c|l| }
        \hline
        ${\cal G}=(V,E)$        &  Network topology graph where $V$ is
                                   the set of all nodes and $E$ is the
                                   set of edges between the nodes \\
        $N$                     &  Total number of nodes in the
                                   network \\
        ${\cal G}_T=(V_T, E_T)$ &  Tree topology graph where
                                   $V_T\subseteq V$ and
                                   $E_T\subseteq E$\\
        %$V$ & Set of nodes in the tree\\
        %$E$ & Set of edges in the tree\\
        $n_0$                   &  Sink node, $n_0 \in V_T$ \\
        $n_i$                   &  $i$-th node in the network,
                                   $n_i \in V, 1\leq i < N$ \\
        $f$                     &  Slot frame \\ % where $f=\{0,\ldots, \infty\}$\\
        $S$                     &  Total number of time slots in a slot
                                   frame \\
        $t$                     &  Time slot where $0\leq t < S$ \\
        %$T_i$ & Set of time slots assigned to node $n_i$\\
        $M$                     &  Total number of channel offsets \\
        $\emph{chOf}$           &  Channel offset where
                                   $0\leq \emph{chOf} < M$\\
        %$M_i$ & Set of channel offsets assigned to node $n_i$\\
        $c^{(t,
        \emph{chOf})}_{(i, j)}$ &  Cell with coordinates
                                   $(t, \emph{chOf})$ assigned to link
                                   $(i, j)$ \\
        %$\emph{Tx}_i^{(t, \emph{chOf})}$ & Transmission of node $n_i$ at slotOffset $t$ and channelOffset $\emph{chOf}$\\
        %$E_{i}^{(1)}$ & One-hop neighborhood of node $n_i$ \\
        %$E_{i}^{(2)}$ & Two-hop neighborhood of node $n_i$\\
        $N_i^{(1)}$             &  Set of one-hop neighbors of node
                                   $n_i$ \\
        $N_{i,j}$               &  Set of all links that could interfere
                                   with link $(i,j)$ \\
        %$\tilde N_i^{(1)}$ & Set of one-hop neighbors that can exchange cells with node $n_i$\\
        %$N_{i}^{(2)}$ & Set of two-hop neighbors of node $n_i$\\
        $q_{(i,j)}^{f}$         &  Number of packets that $n_i$ sends to
                                   $n_j$ at frame $f$ \\
        $p_{(i,j)}^{f}$         &  Number of allocated cells to link
                                   $(i, j)$ at frame $f$ \\
        $z_{(i,j)}^{f}$         &  Number of new packets received by
                                   $n_i$ with destination $n_j$ at frame
                                   $f$ \\ %Number of cells to transmit new received packets by node $n_i$ within slot frame $f$
        $u_{(i,j)}^{f}$         &  Number of cells added/deleted to link
                                   $(i,j)$ at frame $f$ due to Local
                                   Voting \\ %Number of cells that node $n_i$ adds/deletes within slot frame $f$
        $r_{(i,j)}^{f}$         &  Number of cells released from link
                                   $(i,j)$ at frame $f$ \\
        $x_{(i,j)}^{f}$         &  Load of link $(i, j)$ at frame $f$\\
        $\tilde N_{(l,k)} \subset N_i^{(1)} $  &  Set of neighboring links of link $(l,k)$, that can give at least one cell to link $(l,k)$\\
        \hline
      \end{tabular}
    %  \vspace{-0.4cm}
    \end{center}
  \end{table}

The communication in the network can be modeled by a graph ${\cal G}=(V,E)$,
where $ V=\{n_i:  0 \leq i < N\} $ is the set of all nodes and $E$ is the
set of edges that represent the communication symmetric links between the nodes.
Data is gathered over a tree structure ${\cal G}_T=(V_T,E_T)$ rooted at the
sink node $n_0$ where $n_0\in V_T, V_T\subseteq V$, and $E_T\subseteq E$.
We consider both of the cases where each node has only one parent (tree) and
where there are multiple parents per node.
Without loss of generality, we consider a single-sink model.
We assume that all nodes are synchronized, and each node has a single
half-duplex radio transceiver.
Since the communication is half-duplex, each node cannot transmit and
receive simultaneously on the same channel.
We propose a {\it{link scheduling}} algorithm where a link $(i, j)$ is
a pairwise assignment of a directed communication between a pair of nodes
$(n_i, n_j)$, where $i\neq j$, in a specific time slot within a given
frame and a channel.

Time in TSCH is slotted, and assumed to be (almost) perfectly synchronized
in the whole system.
The basic time interval is referred to as a time slot $t$.
A time slot $t$ is long enough for one packet to be sent from node $n_i$
to node $n_j$ and optionally node $n_j$ to reply.
This is represented in Fig.~\ref{schedule} for the node pair $(n_3, n_1)$.
Each frame $f$ consists of equal number of $S$ time slots with the same
duration $f=\{0,\ldots, S-1\}$.
The resource allocation in a 6TiSCH network is controlled by a TSCH schedule
that allocates cells for node communication.
A cell represents a unit of bandwidth that is allocated based on a decision
by a centralized or a distributed scheduling algorithm.
As explained previously, a cell is defined by a pair of time slot and
channel offset~\cite{watteyne:hal-01208395}.
The slot offset is equal to time slot $t$ while the channel offset
\emph{chOf} is translated into a frequency with the following equation:
\begin{equation}
      \label{freq}
      channel = F\left((chOf + ASN) \mod M\right),
\end{equation}
where \emph{chOf} denotes the channel offset, $ASN$ counts the number
of time slots since the network started, $M$ is the number of physical
channels (by default 16 in TSCH), and $\mod$ is the modulo operator.
$F\left( \cdot \right)$ is a bijective function mapping an
integer comprised between 0 and 15 into a physical channel.
The number of channel offsets is equal to the number of available
frequencies $0\leq \emph{chOf} < M$.
The schedule can be represented by a matrix with dimensions: the total
number of channel offsets $M$ and total number of time slots in a
slot frame $S$. One example of a schedule with 4 time slots and 3 channels is
given in Fig.~\ref{schedule}.
Note that some cells can be shared between different links (e.g. $n_8
\rightarrow n_5$ and $n_6 \rightarrow n_4$), as long as there is no
mutual interference.
%There are as many $\emph{channelOffset}$ values as there are frequencies available.
A TSCH schedule instructs node $n_i$ what to do in a specific time slot
and frequency: transmit, receive, or sleep.
The cell assigned to link $(i, j)$ at slot offset $t$ and channel offset
$\emph{chOf}$ is denoted by $c^{(t, \emph{chOf})}_{(i, j)}$ where
  \begin{equation}
      \label{cell}
      c^{(t, \emph{chOf})}_{(i, j)} = \left\{
      \begin{array}{rl}
          1, & n_i \mbox{ transmits and } n_j \mbox{ receives at } t
               \mbox{ and } \emph{chOf}; \\
          0, & \mbox{otherwise};
      \end{array} \right.
  \end{equation}
for $n_i\in V, 0\leq t\leq S-1,$ and $0\leq \emph{chOf} \leq M-1$.

There exists a scheduled cell for node $n_j$ from the pair $(n_i, n_j)$ such
that $n_j$ receives the transmission from $n_i$ at the same $t$ and
$\emph{chOf}$ that are scheduled for transmission of node $n_i$.
Each scheduled cell is an opportunity for node $n_i$ to communicate with its
one-hop neighbor $n_j$ where $n_j \in N_i^{(1)}$ and $N_i^{(1)}$ denotes
the one-hop neighborhood of node $n_i$.
We consider an interference model where two nodes are one-hop neighbors as
long as their Packet Delivery Rate (PDR) is larger than 0.
In real scenarios, nodes that are at more than two hops distance can also
interfere but with lower probability~\cite{journals/ett/MorellVVW13}, thus,
we only consider the one-hop neighborhood.
  %The interference range is considered to be equal to the communication range, and both values are considered to be constant throughout the network.
  %distance is less than the communication range.  %   when node $n_i$ has a transmit cell to node $n_j$ at $t=2$ and $\emph{chOf}=3$, i.e. $c^{(2, 3)}_i$, then $n_j$ has a receive cell from node $n_i$ at $t=2$ and $\emph{chOf}=3$, i.e. $c^{(2, 3)}_j$.

  \begin{figure}
    \centering
    \includegraphics[width=\textwidth]{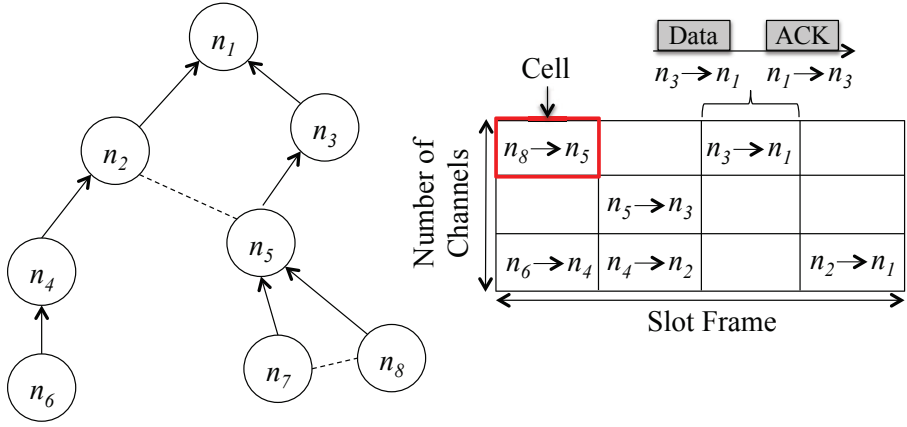}
    \caption{TSCH schedule for the presented topology where solid lines
             represent connection between nodes based on RPL and dashed
             lines represent possible communication between nodes.}
    \label{schedule}
  \end{figure}

The 6top sublayer qualifies each cell as either a hard or a soft cell.
A soft cell can be read, added, removed, or updated by the 6top sublayer,
while a hard cell is read-only for the 6top sublayer.
In the context of the proposed algorithm, all reallocated cells are soft
cells.

The role of the bandwidth reservation algorithm is to ensure that there
are enough resources to meet the application requirements such as traffic load,
end-to-end delay, and
reliability.
  %The objective of this work is to design a load-balancing scheduling algorithm that equalizes the ratio of the queue length over the number of allocated cells within a slot frame by adding/deleting cells between neighbor nodes. We call this ratio the load of each node.
  %We address the following problem: \emph{How to allocate cells to node pairs in order to minimize the link delay?}
The proposed scheduling algorithm must satisfy the following communication
conditions:
\begin{enumerate}
  \item Multi-point to point communication where data is generated only
        by source nodes $n_i$, where $n_i \in V_T$, and it is gathered at
        the sink node $n_0$.
  \item The communication is half-duplex, thus, each node cannot transmit
        and receive simultaneously on the same channel.
  \item Nodes $n_i$ and $n_j$ from the pair $(n_i, n_j)$ transmit and
        receive in the same cell, i.e., $(t, \emph{chOf})$, respectively.
  \item \textit{Collision-free} communication: A cell with coordinates
        $(t, \emph{chOf})$ is allocated to link $(i, j)$ such that exactly
        one of the neighbors, i.e., node $n_i$, of the receiving node $n_j$
        should transmit at slot offset $t$ and channel offset
        $\emph{chOf}$, and the other neighbors $n_l$ of the receiving
        node $n_j$, where $n_l \in N_{j}^{(1)}$ and $n_l\neq n_i$, might
        receive at slot offset $t$ and channel offset $\emph{chOf}$.
\end{enumerate}
In general, to prevent collisions between pairs of links $(i, j)$ and
$(l, k)$, the following \textit{collision-free} constraints are defined.

\emph{Primary conflict constraint:} A node cannot transmit and/or
receive two packets at the same time slot $t$, even not on different
channels \emph{chOf1} and \emph{chOf2}, i.e.,
  \begin{equation} \label{collision-free1}
  \begin{split}
      c^{(t, \emph{chOf1})}_{(i, j)} c^{(t, \emph{chOf2})}_{(l, k)} = 0,\mbox{for all:}
      \{i,j\}\cap\{k,l\}\neq \emptyset, \\ n_k \in N_{i}^{(1)}, n_l \in N_j^{(1)}.
      \end{split}
  \end{equation}

Eq.~(\ref{collision-free1}) indicates that the communication is half-duplex.

\emph{Secondary conflict constraint:} A receiver cannot decode an incoming
packet in a channel \emph{chOf}, if another node in its neighborhood is
also transmitting at the same channel \emph{chOf} at the same time slot $t$.
Hence, a node is not allowed to receive more than one transmission
simultaneously, i.e.,
  \begin{equation} \label{collision-free2}
      c^{(t, \emph{chOf})}_{(i, j)} c^{(t, \emph{chOf})}_{(l, k)} = 0, \mbox{for all:}
      n_k \in N_{i}^{(1)}, n_l \in N_{j}^{(1)}.
  \end{equation}

Eq.~(\ref{collision-free2}) indicates the interference constraint.

\section{Local Voting Bandwidth Reservation Algorithm}
\label{lv}

Each source node $n_i$, where $n_i \in V_T$ and $n_i\neq n_0$, has a queue
with packets to be transmitted to the root through a parent node,
which is a one-hop neighbor of the node $n_i$.
The internal scheduling on the queue is first-come-first-serve.
A cell is allocated to link $(i, j)$ so that node $n_i$ transmits a packet
to $n_j$ as it is given in Eq.~(\ref{cell}).

The state of link $(i, j)$, where $n_j \in N_i^{(1)}$, at
the beginning of frame $f+1$ is described by three characteristics:
\begin{itemize}
  \item $q_{(i, j)}^{f+1}$ is the number of packets (queue length) that
        node $n_i$ has to transmit to node $n_j$ at slot frame $f+1$;
  \item $p_{(i, j)}^{f}$ is the number of cells allocated to link
        $(i, j)$ at the previous slot frame $f$, i.e.,
        $p_{(i, j)}^{f} =
                \sum\limits_{t=0}^{S-1}  c_{(i, j)}^{(t, \emph{chOf})}$.
\end{itemize}
There is no sum over the channels in the equation for calculating
$p_{(i, j)}^{f}$ due to the fact that each node has a single transceiver,
so each link can be allocated only one channel at each time slot.

The dynamics of each link $(i, j)$ are calculated as:
\begin{eqnarray}
   \label{dynamics}
   \begin{aligned}
      q_{(i, j)}^{f+1} &= \max\{0, q_{(i, j)}^{f}
                     - p_{(i, j)}^{f+1} \} + z_{(i, j)}^{f},  \\
      p_{(i, j)}^{f+1} &= p_{(i, j)}^{f} + u_{(i, j)}^{f+1},
  \end{aligned}
\end{eqnarray}
where
\begin{itemize}
  \item $z_{(i, j)}^{f}$ is the number of new packets received from upper
        layers or from neighboring nodes of node $n_i$ with a next-hop
        destination equal to node $n_j$ at frame $f$;
  \item $u_{(i, j)}^{f+1}$ is the number of cells that are added or
        released to link $(i, j)$ at frame $f+1$ due to LV.
\end{itemize}

The objective of the proposed LV algorithm is to schedule link transmissions
in such a way that the minimum maximal (min-max) link delay is achieved.
The algorithm stems from the finding that the shortest delivery time is
obtained when the load is equalized throughout the network.
The finding has been proved in~\cite{7047923} for the case of load balancing
in cluster computing, while~\cite{8091101} presents a similar result for the
case of wireless mesh networks with a single channel
and a node scheduling MAC layer.
In this paper we extend the result for link scheduling where the MAC layer
follows TSCH mode.

\nop{%
  \begin{table*}[t]
    \centering
    \caption{An example of the evolution of Local Voting algorithm applied for the network in Fig.~\ref{schedule}}\label{lvSchedule}
    \resizebox{\linewidth}{!}{%
      \begin{tabular}{|c|c|c|c|c|c|c|c|c|c|c|c|c|c|c|c|c|c|c|c|c|c|c|c|c|c|c|c|c|c|c|c| }
        %Link (i,j) & (6, 4) & (3, 1) & (2, 1) & (7, 5) & (4, 2) & (8, 5) & (5, 3) \\
        \hline
        $(i,j)$ & \multicolumn{4}{|c|}{(6, 4)}   & \multicolumn{4}{|c|}{(3, 1)} & \multicolumn{4}{|c|}{(2, 1)} & \multicolumn{4}{|c|}{(7, 5)}  & \multicolumn{4}{|c|}{(4, 2)} & \multicolumn{4}{|c|}{(8, 5)} & \multicolumn{4}{|c|}{(5, 3)} \\
        \hline
        $f$ & $p$ & $q$ & $x$ & $u$ & $p$ & $q$ & $x$ & $u$ & $p$ & $q$ & $x$ & $u$ & $p$ & $q$ & $x$ & $u$ & $p$ & $q$ & $x$ & $u$ & $p$ & $q$ & $x$ & $u$ & $p$ & $q$ & $x$ & $u$\\
        \hline
        0 & 0 & 10 & NA & 3 & 0 & 20 & NA & 7 & 0 & 5 & NA & 1 & 0 & 25 & NA & 7 & 0 & 45 & NA & 11 & 0 & 7 & NA & 2 & \textbf{\textcolor{red}{0}} & \textbf{\textcolor{red}{14}} & \textbf{\textcolor{red}{NA}} & \textbf{\textcolor{red}{3}}\\
        \hline
        1  & 3 & 7 & 3 & -1 & 7 & 16 & 3 & -3 & 1 & 15 & 16 & 2 & 7 & 18 & 3 & -2 & 11 & 37 & 4 & -2 & 2 & 5 & 3 & 0 & \textbf{\textcolor{red}{3}} & \textbf{\textcolor{red}{20}} & \textbf{\textcolor{red}{7}} & \textbf{\textcolor{red}{2}}\\
        \hline
        2 & 2 & 5 & 3 & 0 & 4 & 17 & 5 & 0 & 3 & 21 & 8 & 1 & 5 & 13 & 3 & -1 & 9 & 30 & 4 & -2 & 2 & 3 & 2 & -1 & 5 & 22 & 5 & 0\\
        \hline
        3  & 2 & 3 & 2 & -1 & 4 & 18 & 5 & 0 & 4 & 24 & 7 & 1 & 4 & 9 & 3 & -1 & 7 & 25 & 4 & 0 & 1 & 2 & 3 & 0 & 5 & 22 & 5 & 1\\
        \hline
        4 & 1 & 2 & 3 & 0 & 4 & 20 & 6 & 0 & 5 & 26 & 6 & 1 & 3 & 6 & 3 & -1 & 7 & 19 & 3 & -1 & 1 & 1 & 2 & -1 & 6 & 20 & 4 & 0\\
        \hline
        5  & 1 & 1 & 2 & 0 & 4 & 22 & 6 & 1 & 6 & 26 & 5 & 0 & 2 & 4 & 3 & 0 & 6 & 14 & 3 & -1 & 0 & 1 & NA & 0 & 6 & 16 & 3 & -1\\
        \hline
        6 & 1 & 0 & 1 & -1 & 5 & 22 & 5 & 0 & 6 & 25 & 5 & 1 & 2 & 2 & 2 & -1 & 5 & 10 & 3 & -1 & 0 & 1 & NA & 1 & 5 & 13 & 3 & 0\\
        \hline
        7  & 0 & 0 & NA & 0 & 5 & 22 & 5 & 1 & 7 & 22 & 4 & 0 & 1 & 1 & 2 & 0 & 4 & 6 & 2 & -1 & 1 & 0 & 1 & -1 & 5 & 10 & 3 & -1\\
        \hline
        8  & 0 & 0 & NA & 0 & 6 & 20 & 4 & 1 & 7 & 18 & 3 & 0 & 1 & 0 & 1 & -1 & 3 & 3 & 2 & -1 & 0 & 0 & NA & 0 & 4 & 7 & 2 & 0\\
        \hline
        9 & 0 & 0 & NA & 0 & 6 & 18 & 4 & 2 & 7 & 13 & 2 & -1 & 0 & 0 & NA & 0 & 2 & 1 & 1 & -1 & 0 & 0 & NA & 0 & 4 & 3 & 1 & -2\\
        \hline
        10  & 0 & 0 & NA & 0 & 7 & 13 & 2 & 2 & 6 & 8 & 2 & 0 & 0 & 0 & NA & 0 & 1 & 0 & 1 & -1 & 0 & 0 & NA & 0 & 2 & 1 & 1 & -1\\
        \hline
        11 & 0 & 0 & NA & 0 & 8 & 6 & 1 & 3 & 6 & 2 & 1 & -2 & 0 & 0 & NA & 0 & 0 & 0 & NA & 0 & 0 & 0 & NA & 0 & 1 & 0 & 1 & -1\\
        \hline
        12  & 0 & 0 & NA & NA & 11 & 0 & 1 & NA & 4 & 0 & 1 & NA & 0 & 0 & NA & NA & 0 & 0 & NA & NA & 0 & 0 & NA & NA & 0 & 0 & NA & NA\\
        \hline
      \end{tabular}
    }
  \end{table*}
}

The load of link $(i, j)$ at frame $f$ is defined as the ratio of the queue
length $q_{(i, j)}^{f}$ over the number of allocated cells $p_{(i, j)}^{f}$
as follows:
\begin{equation}
  \label{load}
  x_{(i, j)}^{f} = \left\{
    \begin{array}{rl}
      \left[ \cfrac{q_{(i, j)}^{f}}{p_{(i, j)}^{f}} +0.5 \right],
                                      & \mbox{ if } q_{(i, j)}^{f} > 0, \\
          0,                          & \mbox{ if } q_{(i, j)}^{f}= 0,
    \end{array} \right.
\end{equation}
where
$\left[ \cdot \right]$ is the round function
(rounds a real number to the nearest integer).
Note that by construction of the above definition, the delay at each link
$(i,j)$ (in time slots) can be computed as $x_{(i,j)} \cdot |S|$, where $S$
is the number of slots in a slot frame.

In order to semi-equalize or balance the load in the network, neighboring
links can exchange cells as long as Eq.~(\ref{collision-free1}) and
Eq.~(\ref{collision-free2}) are satisfied. The set $N_{i,j}$ contains all
links that could potentially interfere with link $(i,j)$.
This means that
$$
  (l,k) \in N_{i,j} \mbox{ iff } n_k \in N^{(1)}_i \lor n_l \in N^{(1)}_j.
$$

\begin{definition} A conflict-free schedule is link-wise optimal
  or just optimal, if the maximum delay per link in the network is
  smaller or equal than the maximum delay per link for every other
  schedule (min-max).
\end{definition}

\begin{lemma}
  (Optimal schedules are maximal) An optimal schedule is a
  (or has an equivalent) maximal schedule in the sense that\footnote{%
  Symbol $\nexists$ denotes the negation
  of existence $\exists$}
  $\nexists (i,j) \in E $ such that $p_{(i,j)}$ can be increased without
  reducing
  $p_{(l,k)}$ for at least one other link where $(l,k) \in E$.
\end{lemma}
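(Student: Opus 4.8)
The plan is a monotonicity-plus-exhaustion argument: I would first show that adding a cell to any link can never worsen the min-max objective, and then repeatedly add cells until none can be added, arriving at a maximal schedule that is still optimal.

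First I would record the monotonicity of the load in the number of cells. For a fixed queue length $q_{(i,j)} > 0$, the load $x_{(i,j)} = \left[\, q_{(i,j)}/p_{(i,j)} + 0.5 \,\right]$ of Eq.~(\ref{load}) is non-increasing in $p_{(i,j)}$: raising $p_{(i,j)}$ only shrinks the ratio $q_{(i,j)}/p_{(i,j)}$ and hence cannot increase its rounded value, while if $q_{(i,j)} = 0$ the load stays $0$. Consequently, adding a cell to link $(i,j)$ can only decrease or preserve $x_{(i,j)}$, and it leaves the load of every other link unchanged, since those links retain the same queue length and cell count.

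Next, suppose an optimal schedule $\sigma$ is not maximal. By the negation of the maximality condition there is a link $(i,j)\in E$ whose allocation $p_{(i,j)}$ can be increased by one cell without reducing $p_{(l,k)}$ for any other $(l,k)\in E$; here ``can be increased'' means precisely that a conflict-free cell is available, so the resulting schedule $\sigma'$ still satisfies Eq.~(\ref{collision-free1})--(\ref{collision-free2}). By the monotonicity above, $x_{(i,j)}(\sigma') \le x_{(i,j)}(\sigma)$ and all other loads are unchanged, hence $\max_{(l,k)\in E} x_{(l,k)}(\sigma') \le \max_{(l,k)\in E} x_{(l,k)}(\sigma)$. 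Since $\sigma$ already attains the min-max value, the maximum load of $\sigma'$ cannot be strictly smaller, so it is equal and $\sigma'$ is again optimal.

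Finally I would iterate this construction. Each step strictly increases the total allocation $\sum_{(l,k)\in E} p_{(l,k)}$, which is bounded above by the finite number of conflict-free cells in the $M\times S$ schedule matrix, so the process halts after finitely many steps at a schedule $\sigma^{\ast}$ admitting no further cell addition, i.e.\ a maximal schedule, which by the previous paragraph has the same maximum load as $\sigma$ and is therefore optimal. This shows that an optimal schedule either is maximal or has an equivalent (same maximum delay) maximal schedule, as claimed. I expect the termination argument to be the point needing the most care: it relies on the finiteness of the cell pool together with the fact that every augmentation uses only conflict-free cells, so that Eq.~(\ref{collision-free1})--(\ref{collision-free2}) are never violated along the way.
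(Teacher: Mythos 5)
Your proposal is correct and takes essentially the same route as the paper's proof: augment a non-maximal schedule by one cell, observe via monotonicity of $\left[ q_{(i,j)}/p_{(i,j)} + 0.5 \right]$ that the augmented link's delay cannot increase while all other links are untouched, and conclude that the maximum delay does not grow. The only difference is that you make explicit the iteration and finite-termination argument (the bounded pool of conflict-free cells in the $M \times S$ schedule), which the paper compresses into its final ``it follows'' sentence --- a useful clarification, but not a different approach.
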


\begin{proof} Consider a schedule that is not maximal. That means there
  exists $(i,j) \in E$ such that $p_{(i,j)}$ can be increased by one.
  Cells are not reallocated to other links, therefore, for the new
  schedule, the delay for all the other links is unchanged.
  For link $(i,j)$, the new delay is $x'_{(i,j)} \cdot |S| = \left[
  \frac{q_{(i,j)}}{(p_{(i,j)} + 1)} +0.5 \right] \cdot |S| \leq
  x_{(i,j)} \cdot  |S|$.
  It follows that, for every
  non-maximal schedule, there exists a maximal schedule that has smaller
  or equal maximum delay.
\end{proof}

\begin{lemma}\label{balanced}(Optimal schedules are balanced) Assume that
  link $(l,k)$ is the most loaded link in the network, i.e.,
  $(l,k) = {\rm argmax}(x_{(i,j)}), (i,j) \in E$.
  For all optimal schedules, it holds that
  $x_{(l,k)} \leq x_{(i,j)} / (1-1/p_{(i,j)})$
  for the load of the most loaded link $(l,k)$ and the load of every other link
  $(i,j)$, where $(i,j)\in \tilde N_{(l,k)}$, and $ \tilde N_{(l,k)} $ is the
  set of neighboring links of link $(l,k)$ that can give at least one cell to
  link $(l,k)$, without violating the constraints from
  Eq.~(\ref{collision-free1}) and Eq.~(\ref{collision-free2}).
\end{lemma}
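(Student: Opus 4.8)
The plan is to establish the bound by a direct exchange argument that exploits the min-max optimality of the schedule, rather than arguing by contradiction. First I would fix an optimal conflict-free schedule, let $(l,k)$ be the most loaded link, and pick an arbitrary neighboring link $(i,j)\in\tilde N_{(l,k)}$. By the very definition of $\tilde N_{(l,k)}$, the cell that $(i,j)$ donates to $(l,k)$ produces again a conflict-free schedule, so Eq.~(\ref{collision-free1}) and Eq.~(\ref{collision-free2}) remain satisfied and the reallocated schedule is a legitimate competitor against which optimality can be tested.

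Next I would record the effect of this single-cell move on the two affected links, leaving all other links and therefore all other loads untouched. The link $(l,k)$ gains one cell, so by the computation used in the proof that optimal schedules are maximal its load does not increase, and for the exact ratio $q_{(l,k)}/(p_{(l,k)}+1)$ it strictly drops because $q_{(l,k)}>0$. The link $(i,j)$ loses one cell, so its new load is $q_{(i,j)}/(p_{(i,j)}-1)$; substituting $q_{(i,j)}=x_{(i,j)}\,p_{(i,j)}$ from Eq.~(\ref{load}) gives
\[
  \frac{q_{(i,j)}}{p_{(i,j)}-1}
  = x_{(i,j)}\,\frac{p_{(i,j)}}{p_{(i,j)}-1}
  = \frac{x_{(i,j)}}{1-1/p_{(i,j)}} .
\]

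I would then close the argument by comparing maxima. Before the move the maximum load equals $x_{(l,k)}$; after the move every link other than $(i,j)$ has load at most $x_{(l,k)}$, and $(l,k)$ itself is now strictly below $x_{(l,k)}$. Hence, assuming $(l,k)$ is the unique most-loaded link, if the new load of $(i,j)$ were also strictly below $x_{(l,k)}$ the reallocated schedule would have a strictly smaller maximum, contradicting optimality. Therefore the only link that can still realize the maximum is $(i,j)$, which forces $x_{(i,j)}/(1-1/p_{(i,j)})\ge x_{(l,k)}$, precisely the claimed inequality. When $p_{(i,j)}=1$ the right-hand side is formally infinite and the bound holds trivially, so only $p_{(i,j)}\ge 2$ needs treatment.

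The main obstacle is that the load in Eq.~(\ref{load}) is a \emph{rounded} quantity, so the identity $q_{(i,j)}=x_{(i,j)}\,p_{(i,j)}$ and the chain of equalities above are exact only up to the rounding operator; I would therefore run the whole exchange argument on the underlying ratios $q/p$ and then transfer the resulting inequality to the integer loads by monotonicity of the round function. A secondary point to make explicit is the treatment of ties for the maximum load: the ``strict drop of the maximum'' step relies on $(l,k)$ being the unique bottleneck, since otherwise another equally loaded link keeps the maximum pinned after a single reallocation. I would resolve this either by stating the uniqueness assumption or by reading optimality in the lexicographic min-max sense, so that eliminating one bottleneck link already yields an improvement and the contradiction goes through.
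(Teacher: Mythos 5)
Your proposal is correct and takes essentially the same route as the paper: a single-cell exchange from a neighboring link $(i,j)\in\tilde N_{(l,k)}$ to the bottleneck $(l,k)$, computation of the two affected loads (yielding exactly the factor $1/(1-1/p_{(i,j)})$), and an appeal to min-max optimality to force the claimed inequality. The two caveats you flag---that the rounding in Eq.~(\ref{load}) does not commute with the algebraic manipulations, and that ties for the maximum load break the strict-improvement step---are genuine soft spots that the paper's own proof silently glosses over (it even concludes from a ``smaller than or equal'' maximal delay that the schedule is not optimal), so your treatment is, if anything, the more careful one.
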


\begin{proof} Assume that an optimal schedule exists where for the most
  loaded link $(l,k)$, $x_{(l,k)} > x_{(i,j)} / (1-1/p_{(i,j)})$ where ${(i,j)}
  \in \tilde N_{(l,k)}$.
  Since $(l,k)$ is the most loaded link, the maximal delay for such a schedule
  is $x_{(l,k)} \cdot  |S|$.
  Since link $(i,j) \in \tilde N_{(l,k)}$, it follows that a cell of link
  $(i,j)$ can be reassigned to link $(l,k)$.
  After reassigning, the new load for link $(l,k)$ is
  $ x'_{(l,k)} =[ q_{(l,k)}/(p_{(l,k)} +1) +0.5]$,
  and the corresponding delay for link $(l,k)$ is
  $[ q_{(l,k)}/(p_{(l,k)} +1) + 0.5 ] \cdot  |S| < x_{(l,k)} \cdot  |S|$.
  In addition, link $(i,j)$ loses a cell so the new delay for link $(i,j)$ is
  $[ q_{(i,j)}/(p_{(i,j)} - 1) +0.5 ] \cdot  |S|
        = [ (q_{(i,j)}/p_{(i,j)})/(1 - 1/p_{(i,j)}) +0.5 ] \cdot  |S|
        = [(q_{(i,j)}/p_{(i,j)})+0.5] /(1 - 1/p_{(i,j)}) \cdot  |S|
        = x_{(i,j)}/(1 - 1/p_{(i,j)}) \cdot  |S|
        < x_{(i,j)} \cdot  |S|$.
  Thus, the new allocation has a maximal delay that is smaller than or equal to
  the maximal delay of the other allocation, so the allocation is not optimal.
\end{proof}

Based on the above reasoning, we design a load balancing strategy with two
goals:
\begin{enumerate}
  \item the produced schedule should be maximal; and
  \item the load in the schedule should be balanced.
\end{enumerate}

It should be noted that, in general, a schedule could be both maximal and
balanced, but still not optimal.
This is because there could exist a reallocation of the slots in the network
that would produce a larger spectral efficiency.
Optimizing the schedule in this sense would require finding a solution for the
{\it NP}-complete link scheduling problem.
This is not easy, so for the purposes of this paper, we do not examine ways of
escaping local optima and finding the global optimum.
However, the simulation results show that the performance of LV is still better
than the performance of the algorithms that we compare with,
and that optimizing the maximal nodal delay also has a positive impact
on the end-to-end delay.

In the following part we explain LV and the way how $u_{(i, j)}^{f+1}$ is
calculated.
LV triggers the 6top sublayer to add and release cells to link $(i, j)$ at
frame $f+1$ for $u_{(i, j)}^{f+1}>0$ and $u_{(i, j)}^{f+1}<0$, respectively.
The value of $u_{(i, j)}^{f+1}$ is calculated as:
\begin{equation} \label{u}
  \begin{split}
    {u}_{(i, j)}^{f+1} &= \\
    &\left[ \cfrac{\left(q_{(i, j)}^{f} + z_{(i, j)}^{f+1} \right)\times S}{%
                      q_{(i, j)}^{f}\!+\!z_{(i, j)}^{f+1}\!%
                      +\!\sum\limits_{(l,k)\!\in\!N_{i,j}}{%
                             \!w_{(i,j,l,k)}\!\left(\!q_{(l, k)}^{f}\!%
                             +\!z_{(l, k)}^{f+1}\!\right) }}
          \right] \\& - p_{(i, j)}^{f},
  \end{split}
\end{equation}
where
\begin{equation}\label{w}
  w_{(i,j,l,k)} = \left\{
  \begin{array}{rl}
     1,            & \mbox{ if } \{i,j\}\cap\{k,l\}\neq \emptyset, \\
     \sfrac{1}{M}, & \mbox{ otherwise. }
  \end{array} \right.
\end{equation}

The value in the round function in Eq.~(\ref{u}) is the number of cells
allocated to link $(i,j)$ at frame $f$ taking into account the new packets from
upper layers or neighboring nodes.
As we can see from the term $q_{(i, j)}^{f}$, the number of allocated
cells is proportional to the queue length within the neighborhood of link
$(i,j)$, so it leads to semi-equal load between the neighboring links.
Also, we scale to the total number of time slots that are needed to transmit
all queued packets in the neighborhood of
link $(i,j)$, so that the total number of time slots in the neighborhood is
equal to
the number of time slots in the frame.
The weight $w_{(i,j,l,k)}$ is used to capture the difference between a
primary and a secondary conflict.
In the first case, since all channels are unavailable to the link, the
value is one, but in the second case, since only one of the available
channels is blocked, the value is $1/M$.

The rationale of Eq.~(\ref{u}) can be also seen if we calculate the load at the
end of frame $f+1$.
If $q_{(i, j)}^{f}> p_{(i, j)}^{f+1} $, then we have
$ x_{(i,j)}^{f+1} = \frac{q_{(i, j)}^{f+1}}{p_{(i, j)}^{f+1}}
  = \frac{q_{(i, j)}^{f} - p_{(i, j)}^{f+1}
    + z_{(i, j)}^{f+1}}{p_{(i, j)}^{f+1}}
  = \frac{q_{(i, j)}^{f} + z_{(i, j)}}{p_{(i, j)}^{f+1}} - 1
$.
In addition, we have that
$$ p_{(i, j)}^{f+1}
   = p_{(i, j)}^{f} + u_{(i, j)}^{f+1}
   = \cfrac{\left(q_{(i, j)}^{f} + z_{(i, j)}^{f+1} \right)\times S}{
                      q_{(i, j)}^{f}\!+\!z_{(i, j)}^{f+1}\!%
                      +\!\sum\limits_{(l,k)\!\in\!N_{i,j}}{%
                             \!w_{(i,j,l,k)}\!\left(\!q_{(l, k)}^{f}\!%
                             +\!z_{(l, k)}^{f+1}\!\right) }},$$
which means that
$$ x_{(i,j)}^{f+1} = \cfrac{
                      q_{(i, j)}^{f}\!+\!z_{(i, j)}^{f+1}\!%
                      +\!\sum\limits_{(l,k)\!\in\!N_{i,j}}{%
                             \!w_{(i,j,l,k)}\!\left(\!q_{(l, k)}^{f}\!%
                             +\!z_{(l, k)}^{f+1}\!\right) }}{S} - 1.$$

We will show that this quantity is invariant for the links $(i,j)$ and $(j,k)$
that share the same neighborhood.
  For $\{i,j\}\cap\{k,l\} = \emptyset$, we have
  $$ x_{(i,j)}^{f+1} =  \cfrac{
                      q_{(i, j)}^{f}\!+\!z_{(i, j)}^{f+1}\!%
                      +\!\sum\limits_{(l,k)\!\in\!N_{i,j}}{%
                             \!w_{(i,j,l,k)}\!\left(\!q_{(l, k)}^{f}\!%
                             +\!z_{(l, k)}^{f+1}\!\right) }}{S} - 1.$$

  By substituting~(\ref{dynamics}) into~(\ref{u}), we get
%\onecolumn
\begin{equation}
  \label{u2}
  \begin{aligned}
      {u}_{(i, j)}^{f+1} = &\\
          & \left[ \cfrac{\left(\max\{0, q_{(i, j)}^{f} - p_{(i, j)}^{f+1} \}
          + z_{(i, j)}^{f} \right) \times S}{\splitdfrac{%
          \left(\max\{0, q_{(i, j)}^{f} - p_{(i, j)}^{f+1} \}
          + z_{(i, j)}^{f} \right)}
          {+ \sum_{(l,k) \in N_{i,j}}{w_{(i,j,l,k)} \times \left(\max\{0,
          q_{(l, k)}^{f} - p_{(l, k)}^{f+1} \} + z_{(l, k)}^{f} \right)}}}
          \right] \\ - &p_{(i, j)}^{f}.
         \end{aligned}
\end{equation}

\nop{%
  \begin{example}
    To illustrate the proposed LV algorithm, consider the network given in
    Fig.~\ref{schedule}.
    Assume that the total number of cells in the schedule is 75 where the
    number of channels is $M=5$ and the number of time slots per slot frame
    is $S=15$.
    Assume that the initial queue lengths are: $q_{(6,4)}^0 = 10, \;
    q_{(3,1)}^0 = 20, \; q_{(2,1)}^0 = 5, \; q_{(7,5)}^0 = 25, \;
    q_{(4,2)}^0 = 45, \; q_{(8,5)}^0 = 7, \;$and $q_{(5,3)}^0 = 14$.
    These values correspond to the values of $q$ for $f=0$ for each of the
    links given in Table~\ref{lvSchedule}.

    We next show how the values presented in red color for the link $(5, 3)$
    in Table~\ref{lvSchedule} are calculated for the first two frames, i.e.
    $f=0$ and $1$.
    The queue length $q$ for $f=0$ is equal to $14$. In the beginning, no
    slots are allocated to link $(5,3)$. Hence, $p=0$ and we do not
    calculate the load $x$ since it cannot be defined for $0$ slot
    allocation.
    The $u$ value is calculated with Eq. (\ref{u}). The link $(5,3)$ is in
    a primary conflict with the links $(7,5), (8,5),$ and $(3,1),$ and the
    value of $w_{(i,j,l,k)}$ for these links is 1. On the other hand, the
    link $(5,3)$ is in a secondary conflict with the link $(4,2)$ and the
    value of $w_{(5,3,4,2)}$ is 1/5.
    It follows that
    \begin{eqnarray}
        \nonumber
        u_{(5,3)}^0 = & \left[ \cfrac{14 \times 15}{14 + 1 \times (25+7+20)
                    + \sfrac{1}{5}\times 45} \right] - 0 & = 3. \\
        \nonumber
    \end{eqnarray}
    This means that LV triggers the 6top sublayer to allocate 3 cells to the
    link $(5,3)$. Following Eq. (\ref{dynamics}), the number of allocated
    cells $p$ and the queue length $q$ for $f=1$ become 3 and 20,
    respectively.
    Although 3 packets have been sent, still new packets have been received
    from the links $(7,5)$ and $(8,5)$.
    Therefore, the queue length for $f=1$ becomes $q=\max\{0,14-3\}+7+2=20$.
    The load $x$ is calculated as
    $$x_{(5,3)}^1 = \left[ \sfrac{20}{3}+0.5 \right] = 7.$$
    The value of $u$ is calculated in a similar way as it was presented for
    $f=0$ and so forth.
    As we can see from Table~\ref{lvSchedule}, the load is equalized for all
    links in the 12-th frame.
  \end{example}
  }

\subsection{The Local Voting Algorithm}
Alg.~1 presents Local Voting algorithm.
All links (edges) are examined sequentially at the beginning of each frame.
The source node requests for cells, not the receiver. Since we consider a
link scheduling scenario, the destination of each transmission is known
during the scheduling phase.
Every link in the network that has a positive queue length calculates a
value $u^{f+1}$ (given in Eq.~(\ref{u})).
If node $n_i$ has packets to send to node $n_j$, the value of
${u}_{(i, j)}^{f+1}$ determines the number of cells that the link $(i,j)$ should
ideally gain or release at slot frame $f+1$.
If $u_{(i,j)}^{f+1}$ is a positive value, then LV asks from the 6top
sublayer to add cells to link $(i,j)$.
Otherwise, if $u_{(i,j)}^{f+1}$ is a negative value, then LV requests from
the 6top sublayer to release $u_{(i,j)}^{f+1}$ cells that have been
allocated to $(i,j)$.
The cell reallocation should not cause collisions with respect to
Eq.~(\ref{collision-free1}) and Eq. (\ref{collision-free2}).
The collision-free constraint is implemented in 6top sublayer which is
responsible for collision-free communication.
On the other hand, if node $n_i$ does not have packets to send to
destination $n_j$ and cells have been already allocated to link $(i,j)$ in
the previous frame, then all allocated cells $p_{(i,j)}^{f}$ are released.
In general, cells are removed from links with a lower load and are offered
to links with a higher load.

\begin{algorithm} \label{alg2}
  \floatname{algorithm}{Algorithm}
  \begin{algorithmic}
    \For {$(i, j) \in E$}   \Comment{Check for all outgoing links
                                $(i,j)$ that originate at node $n_i$}
    \State{$qsum_{(i,j)}^{f+1} =
        (q_{(i, j)}^{f} + z_{(i, j)}^{f+1})  +\sum_{(l,k) \in N_{i,j}}{w_{(i,j,l,k)}
        \times (q_{(l, k)}^{f}} + z_{(l, k)}^{f+1})$}
    \If {$qsum_{(i,j)}^{f+1} \neq 0$} \Comment {Are there packets in the
                               neighborhood of link $(i,j)$ to be sent?}
    \State Calculate $u_{(i,j)}^{f+1} =
        \left[ \frac{(q_{(i, j)}^{f}+z_{(i, j)}^{f+1}) \times S}{qsum_{(i,j)}^{f+1}}
        \right] - p_{(i, j)}^{f}$
    \If {$u_{(i,j)}^{f+1} > 0$} \Comment{The link requests cells}
    \State{Request from 6top to add $u_{(i,j)}^{f+1}$ cells to
        link~$(i,j)$}
    \ElsIf{$u_{(i,j)}^{f+1} < 0$} \Comment{The link releases cells}
    \State{Request from 6top to delete $u_{(i,j)}^{f+1}$ cells from
           link~$(i,j)$}
    \EndIf
    \ElsIf{$p_{(i,j)}^{f} \neq 0$} \Comment{Are there cells allocated
                                        to a link with an empty queue?}
    \State Request from 6top to delete $p_{(i,j)}^{f}$ cells from
           link~$(i,j)$ \Comment{Release the allocated cells}
    \EndIf
    \EndFor
  \end{algorithmic}
  \caption{Local Voting}
\end{algorithm}
To summarize, LV requests from the 6top sublayer to add cells to link
$(i,j)$ at slot frame $f+1$ when:
\begin{itemize}
  \item node $n_i$ has packets to send to node $n_j$ and the value of
        $u_{(i,j)}^{f+1}$ for link $(i,j)$ is positive which means that
        the link $(i,j)$ has a higher load than its neighbors.
\end{itemize}
LV requests from the 6top sublayer to release cells from link $(i,j)$ at
slot frame $f+1$ when:
\begin{itemize}
  \item node $n_i$ has packets to send to node $n_j$ and the value of
        $u_{(i,j)}^{f+1}$ is negative which means that the link $(i,j)$
        has a lower load than its neighbors; or
  \item node $n_i$ does not have packets to send to node $n_j$ and cells
        have been already allocated to link $(i,j)$.
\end{itemize}

\section{Performance Evaluation}
\label{perf}

  \nop{%
  \subsection{Local Voting-6top Interaction}

  Local Voting is located on the top of the 6top sublayer.
  As OTF algorithm~\cite{7273816}, Local Voting issues two 6top commands:
  create and delete soft cell ($\emph{CMD\_ADD}$ and
  $\emph{CMD\_DELETE}$~\cite{wang-6tisch-6top-sublayer-04}).
  It retrieves statistics from the 6top sublayer about the list of neighboring
  links, the queue length of each link, and the number of scheduled cells per
  link.
  Not all of these statistics were available in the reference implementation
  so the 6top sublayer had to be modified to accommodate the additional
  parameters for LV.

    \subsection{Simulations with the 6TiSCH Simulator}
    \label{evaluation}}

The 6TiSCH simulator is an open-source, event-driven Python simulator developed
by the members of the 6TiSCH WG~\cite{simul}.
Reference~\cite{municio2018simulating} discusses the overall architecture of
the 6TiSCH simulator, its use for simulating realistic scenarios, and published
results that use the 6TiSCH simulator for different purposes.
By default, the simulator supports IEEE 802.15.4e TSCH mode~\cite{MACstandard},
RPL~\cite{rfc6550}, 6top~\cite{wang-6tisch-6top-sublayer-04}, and
OTF~\cite{7273816}.
In addition to these protocols, we have added Local Voting and Enhanced OTF
(E-OTF) algorithms\footnote{%
  As an online addition to this article, the source code is
  available at \url{https://github.com/djvergad/local_voting_tsch}
} as part of the work presented in this article.
We have implemented two distinct versions of Local Voting, the original
version presented in~\cite{8480306}, and the modified version presented
in this paper.
The new version, which is marked as "local\_voting\_z" in the figures, differs
from the original by considering the new packets that are expected to arrive
at each slot frame, and not only the current state of the queues of each link
as presented with Eq.~(\ref{u}).

We compare the two versions of LV with OTF~\cite{7273816} and
E-OTF~\cite{8449793} for two threshold values, 4 and 10 cells.
We work with the same simulation parameters as in~\cite{7273816} which have
been set according to RFC5673~\cite{rfc}. The parameters are set according to
a) an industrial environment scenario where traffic can be bursty
and b) a senario where traffic is generated at a steady rate.

For the first case, consider a scenario where a leakage is detected
in an oil and gas system, the
sensors transmit at a higher sample rate in order to minimize the time for
detection of the leakage location, to calculate its magnitude, and to
estimate the impact and the evolution of the leakage.
The simulation parameters are summarized in Table~\ref{parameters}.

The simulation scenario considers a network with a grid topology of
$2km \times 2km$ where $50$ nodes are placed randomly.
Every link is associated with a Packet Delivery Rate (PDR) value between 0.00
and 1.00.
The PDR value per link is constant during a simulation run. Each node has
at least three neighbors where the PDR of the links is at least 50\%.
A node is moved until this condition is satisfied.
The minimum acceptable RSSI value that allows for a packet reception is
$-97dBm$, while the maximum number of MAC retries is set to 5.
The TSCH schedule contains 101 cells where each time slot has duration of
$10ms$.

In the bursty scenario, nodes start to generate data at $20s$ after
the beginning of the simulation.
Since data is generated in bursts, then the next data generation is at $60s$.
We perform simulations where each node
generates 1, 5 or 25 packets per burst.

In the steady state scenario the nodes started transmitting at a random
time between 16.9 and 33 seconds, and they send at an interval of
0.1, 0.2, or 0.4 seconds, with a uniform random variance of 0.05 times the interval.

The queue length of all nodes
is 100 packets.
The presented results are averaged over 500 simulation runs. A new topology
is used for each run.

\begin{table}[!t]
  \renewcommand{\arraystretch}{1.3}
  \caption{Simulation Setup}
  \label{parameters}
  \centering
  \begin{tabular}{l  p{45mm}}
    \hline
    \bfseries Parameter        & \bfseries Value \\
    \hline
    Number of Nodes            & 50 \\
    Deployment area            & square, $2km \times 2km$ \\
    Deployment constraint      & 3 neighbors with PDR>50\% \\
    Radio sensitivity          & $-97dBm$ \\
    Max. MAC retries           & 5 \\
    Length of a slot frame     & 101 cells \\
    Time slot duration         & $10ms$ \\
    Number of channels         & 16 \\
    Burst timestamp            & $20s$ and $60s$ \\
    Number of packets per burst & 1, 5, 25, 50, and 80 packets
                                  per node per burst \\
    Packet inter-arrival interval & 0.1, 0.2, and 0.4 seconds \\
    Queue length               & 100 packets \\
    Number of runs per sample  & 500 \\
    Number of cycles per run   & 100 \\
    6top housekeeping period   & $1s$  \\
    OTF threshold              & 4, 10 cells \\
    OTF housekeeping period    & $1s$ \\
    RPL parents                & 3 \\
    \hline
  \end{tabular}
\end{table}
The metrics used for performance comparison between LV, OTF and E-OTF are as
follows:
\begin{itemize}
  \item the timestamp for the last packet to reach the root shows the time
        needed for the two bursts to be completely received by the root;
  \item the end-to-end latency, defined as the time from a packet
        generation until its reception at the sink;
  \item the energy consumption, calculated by adding the energy of each
        transmission/reception/idle listen; and
  \item end-to-end reliability, defined as the ratio between the number of
        packets received by the sink and the total number of packets sent by
        all nodes;
\end{itemize}

Additionally, in order to better explain the evolution of the simulation and to
give insights on the reasons for the different performance between the
algorithms, we also depict the following values:
\begin{itemize}
  \item the evolution of the queue fill, defined as the total number of packets
        in all buffers in the system; and
  \item the distribution of load in the system, measured using Jain's fairness
        index and the G-fairness index~\cite{wiki:Fairness_measure}.
\end{itemize}

\subsection{Bursty traffic experiments}

\begin{figure}
  \centering
  \includegraphics[width=0.9\textwidth]{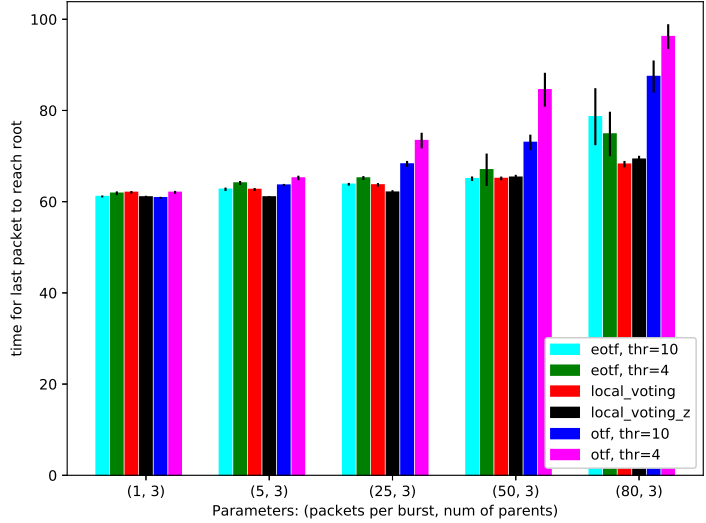}
  %\vspace{-0.1cm}
  \caption{Time for last packet to reach root.}
  %\vspace{-0.5cm}
  \label{time_all_root_vs_threshold_buf_100}
\end{figure}

In Fig.~\ref{time_all_root_vs_threshold_buf_100} we can see the timestamp of
the last packet that was received for each algorithm and each scenario.
In all cases Local Voting and Local Voting z deliver the packets for a shorter
time than the other algorithms, with E-OTF achieving the next best performance,
and OTF having the worst delay.

Similar results are presented in Fig.~\ref{max_latency_vs_threshold_buf_100},
where the maximum end-to-end latency is depicted for each algorithm and each
scenario.
Here we can see that Local Voting and E-OTF have similar performance for
scenarios where the load is low, but as the load increases, the advantage of
the Local Voting algorithm becomes more apparent.

Regarding the average end-to-end delay
(Fig.~\ref{latency_vs_threshold_buf_100}), again Local Voting has the best
performance, though the difference is not as prominent as in the previous
graphs.

In Figs.~\ref{appReachesDagroot_cum_vs_time_buf_100_par_3_pkt_80}--%
\ref{appReachesDagroot_cum_vs_time_buf_100_par_3_pkt_5} we can the evolution
over time of the number of packets
that have reached the root.
In all cases the black and red lines are above the other ones, that indicates that with
Local Voting a larger number of packets have been received at each timestamp.
The largest difference appears in
Fig.~\ref{appReachesDagroot_cum_vs_time_buf_100_par_3_pkt_80},
due to the higher traffic load, while the smallest difference is in
Fig.~\ref{appReachesDagroot_cum_vs_time_buf_100_par_3_pkt_5}.

\begin{figure}
  \centering
  \includegraphics[width=0.9\textwidth]{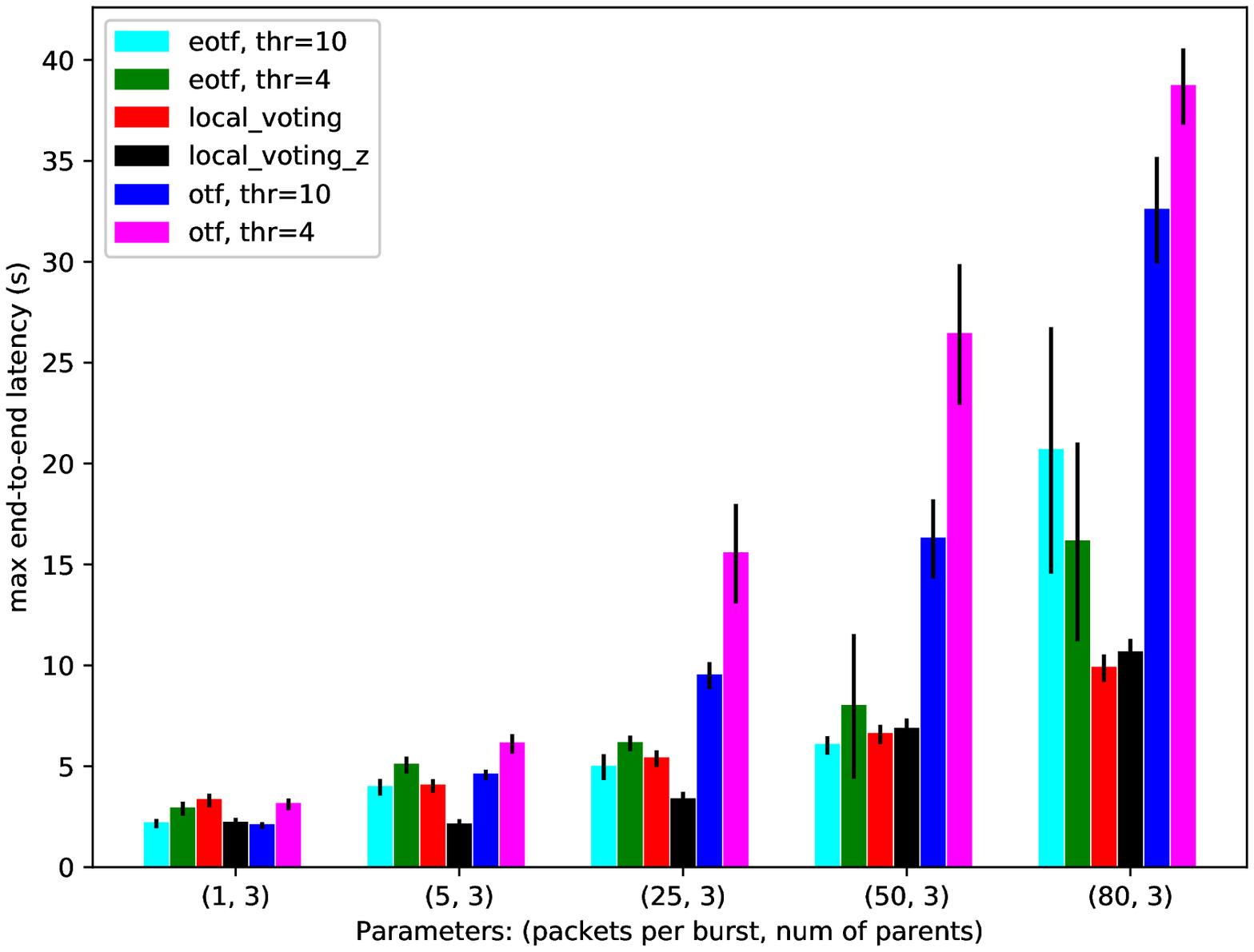}
  %\vspace{-0.1cm}
  \caption{Maximum end-to-end latency.}
  %\vspace{-0.5cm}
  \label{max_latency_vs_threshold_buf_100}
\end{figure}

\begin{figure}
  \centering
  \includegraphics[width=0.9\textwidth]{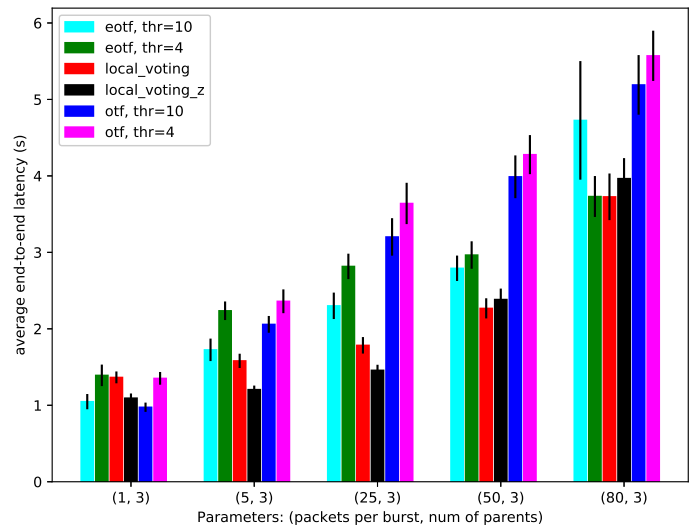}
  %\vspace{-0.1cm}
  \caption{Average end-to-end latency.}
  %\vspace{-0.5cm}
  \label{latency_vs_threshold_buf_100}
\end{figure}

\begin{figure}
  \centering
  \includegraphics[width=0.9\textwidth]{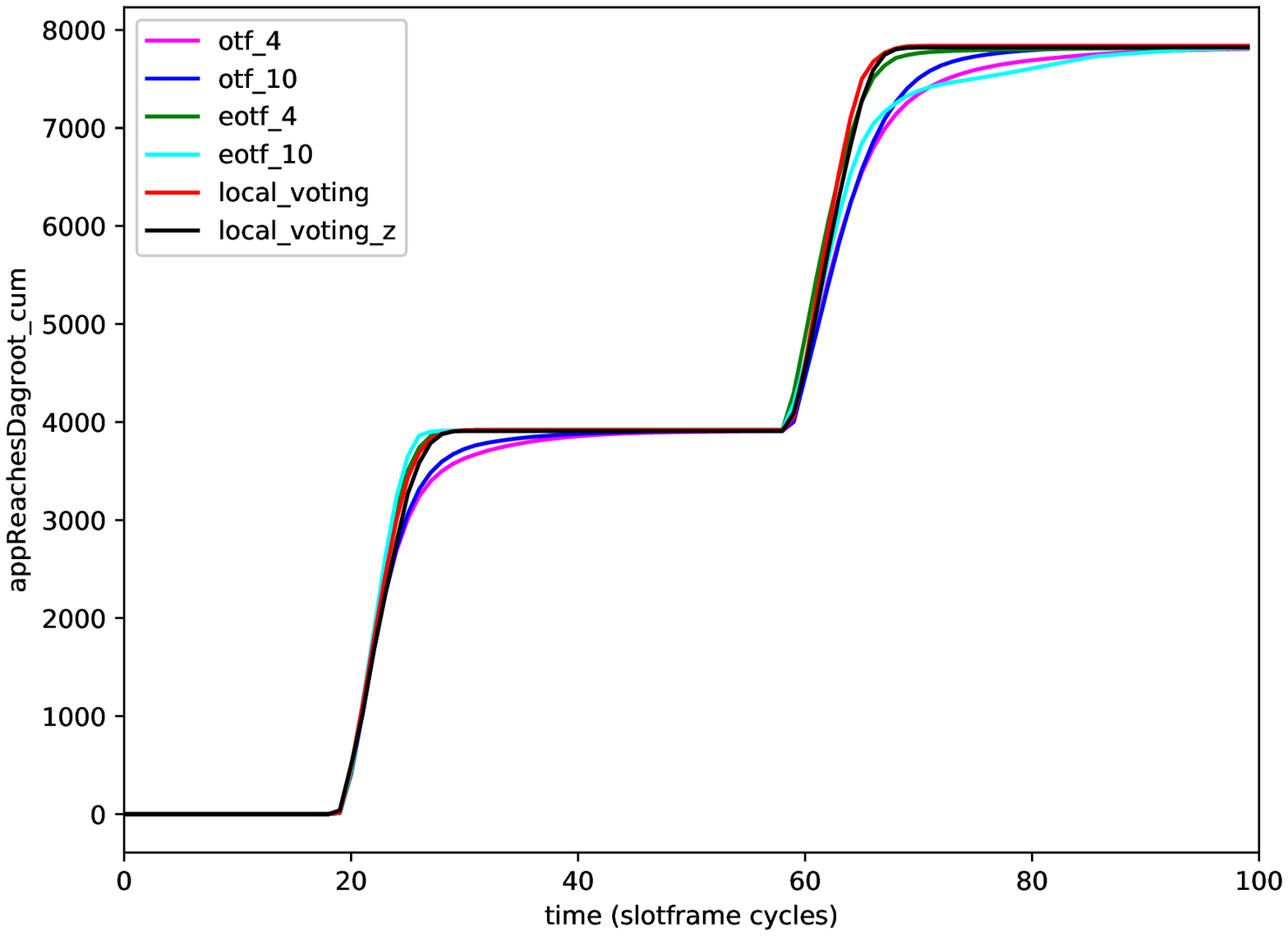}
  %\vspace{-0.1cm}
  \caption{Number of packets that reach the root as a function of time, 80 packets per burst.}
  %\vspace{-0.5cm}
  \label{appReachesDagroot_cum_vs_time_buf_100_par_3_pkt_80}
\end{figure}

\begin{figure}
  \centering
  \includegraphics[width=0.9\textwidth]{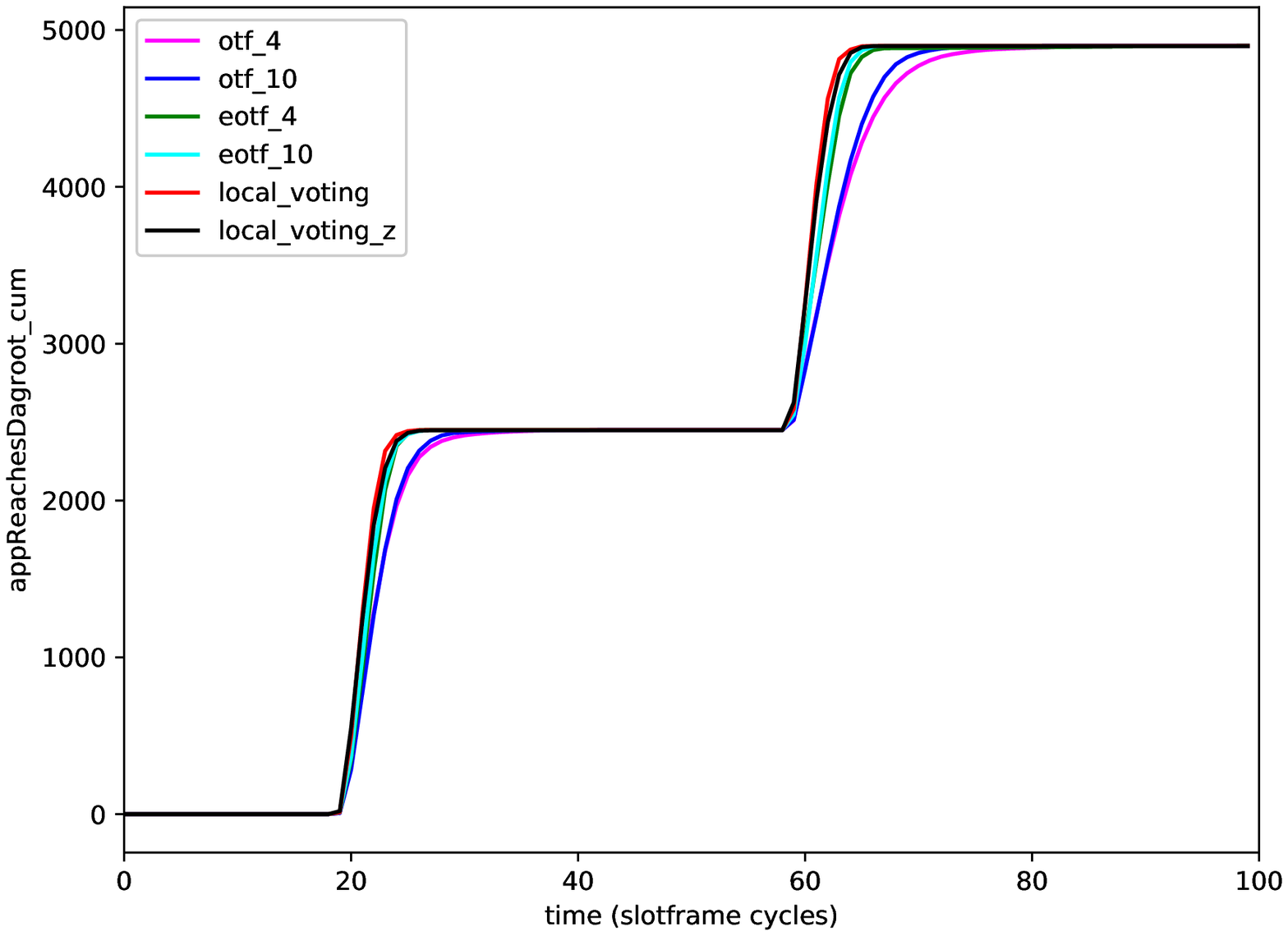}
  %\vspace{-0.1cm}
  \caption{Number of packets that reach the root as a function of time, 50 packets per burst.}
  %\vspace{-0.5cm}
  \label{appReachesDagroot_cum_vs_time_buf_100_par_3_pkt_50}
\end{figure}

\begin{figure}
  \centering
  \includegraphics[width=0.9\textwidth]{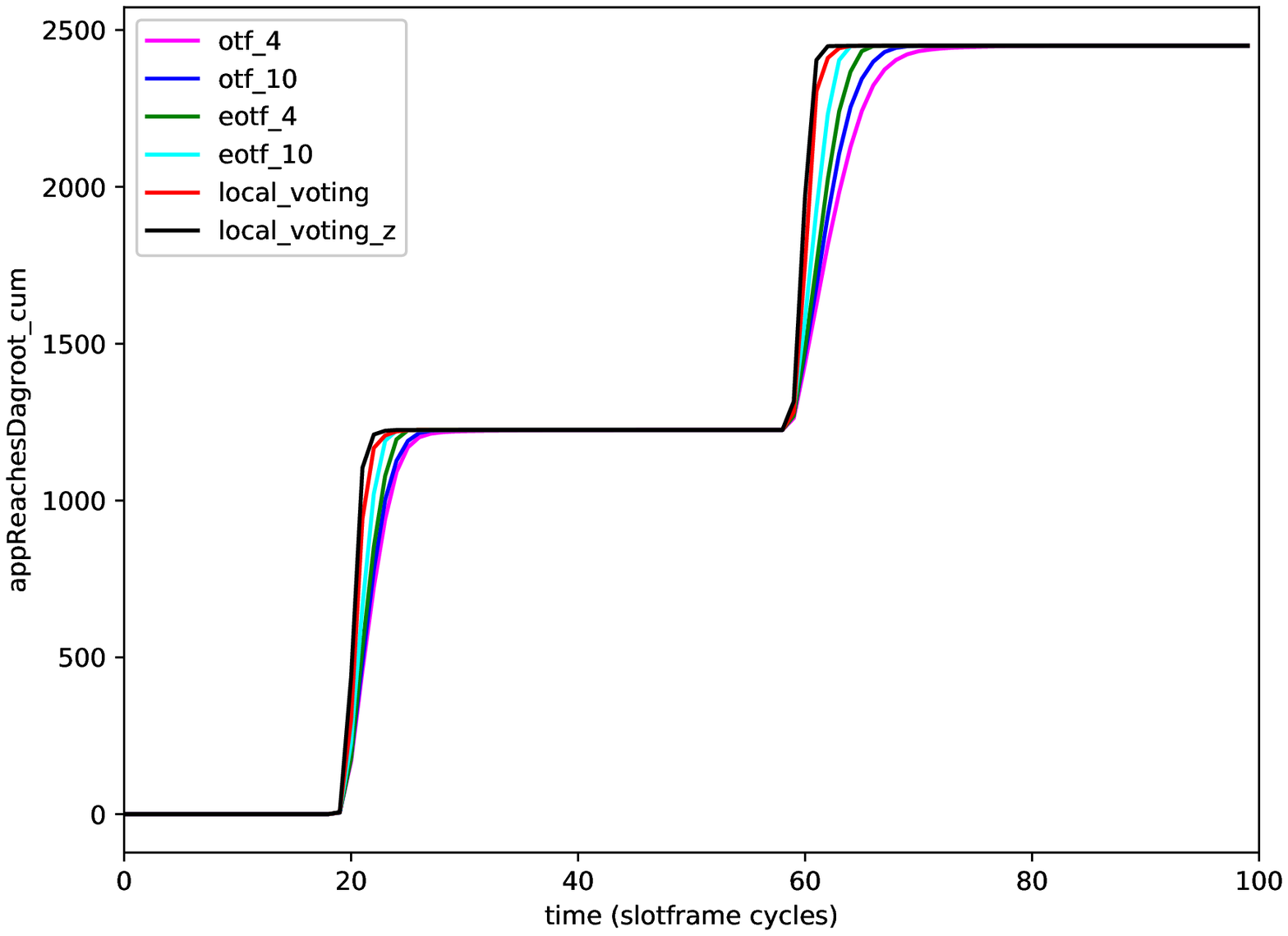}
  %\vspace{-0.1cm}
  \caption{Number of packets that reach the root as a function of time, 25 packets per burst.}
  %\vspace{-0.5cm}
  \label{appReachesDagroot_cum_vs_time_buf_100_par_3_pkt_25}
\end{figure}

\begin{figure}
  \centering
  \includegraphics[width=0.9\textwidth]{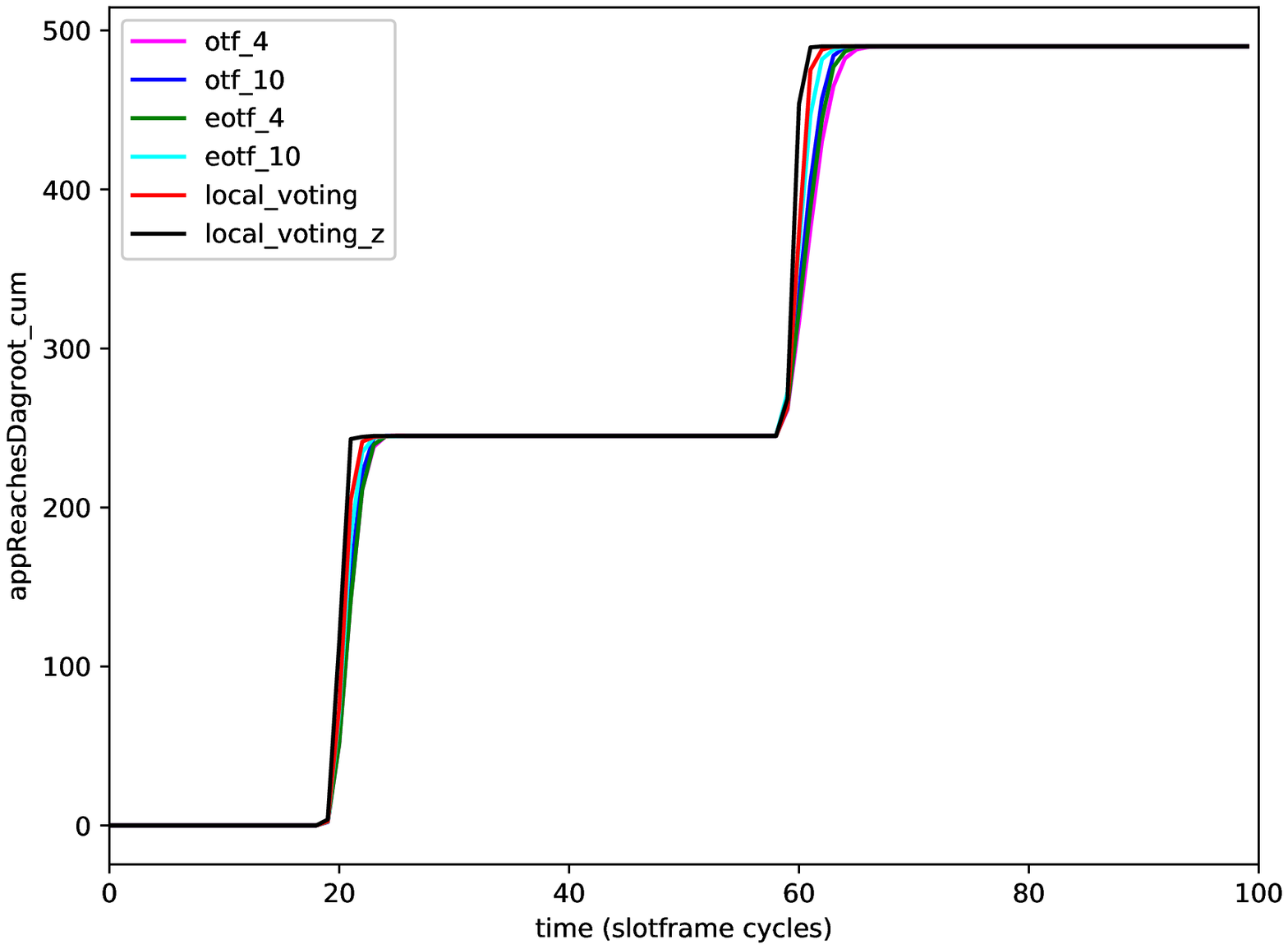}
  %\vspace{-0.1cm}
  \caption{Number of packets that reach the root as a function of time, 5 packets per burst.}
  %\vspace{-0.5cm}
  \label{appReachesDagroot_cum_vs_time_buf_100_par_3_pkt_5}
\end{figure}

Another important aspect of the performance of the algorithms is the energy
that is consumed for the delivering the data to the root.
Fig.~\ref{chargeConsumedPerRecv_vs_threshold_buf_100} depicts the energy that
is used per received packet, i.e., the fraction of the consumed energy over the
number of packets that were successfully delivered to the root.
As expected, as the number of packets per burst increases, the energy per
packet reduces.
We can also see that for small burst sizes OTF has the highest energy
consumption per packet, whereas for large burst sizes, E-OTF consumes the most
energy per packet.
Similar results are depicted in Fig.~\ref{chargeConsumed_vs_threshold_buf_100},
where the total energy consumption per simulation is depicted.
Here the energy consumed increases as the number of packets per burst increase,
which is expected, since there are more data transmissions.
We show again that for larger numbers of packets per burst, the E-OTF
algorithm consumes significantly more energy than the Local Voting and the OTF
algorithms.
The evolution of the energy consumption over time is given in
Figs.~\ref{chargeConsumed_vs_time_buf_100_par_3_pkt_80}--%
\ref{chargeConsumed_vs_time_buf_100_par_3_pkt_5}.
This confirms that E-OTF uses more energy than the other algorithms, whereas
Local Voting and OTF have similar consumption.
The conclusion is that Local Voting has performance in terms of delay similar
to E-OTF, but with an energy consumption similar to OTF, so it combines both
good delay performance and energy efficiency.

\begin{figure}
  \centering
  \includegraphics[width=0.9\textwidth]{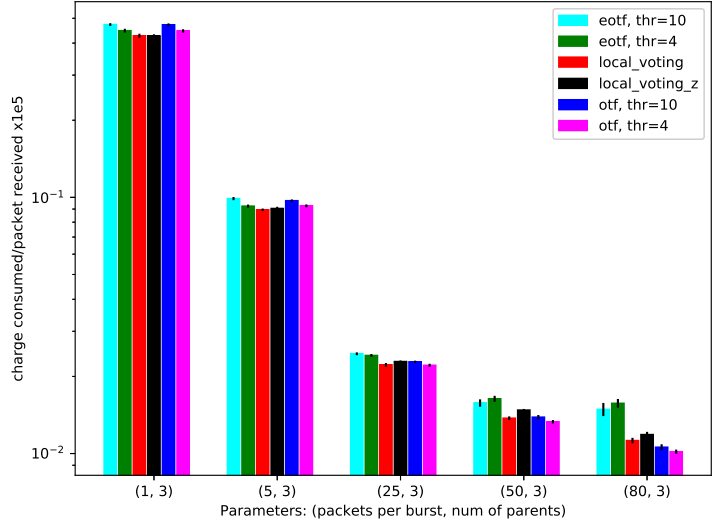}
  %\vspace{-0.1cm}
  \caption{Energy consumption per received packet.}
  %\vspace{-0.5cm}
  \label{chargeConsumedPerRecv_vs_threshold_buf_100}
\end{figure}

\begin{figure}
  \centering
  \includegraphics[width=0.9\textwidth]{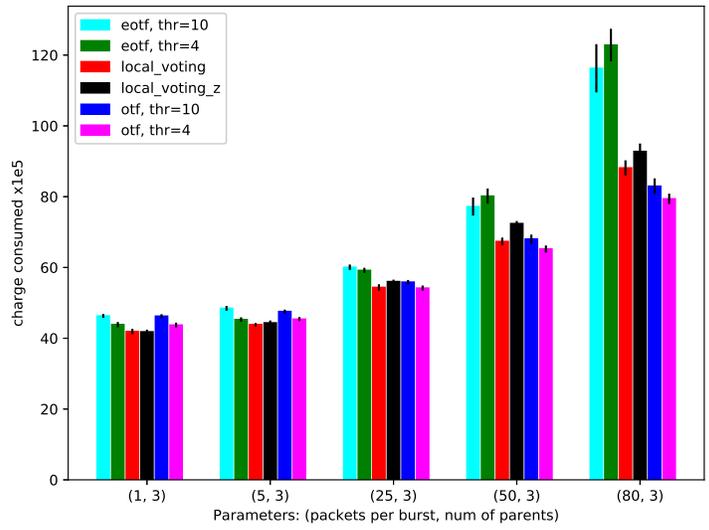}
  %\vspace{-0.1cm}
  \caption{Energy consumption}
  %\vspace{-0.5cm}
  \label{chargeConsumed_vs_threshold_buf_100}
\end{figure}

\begin{figure}
  \centering
  \includegraphics[width=0.9\textwidth]{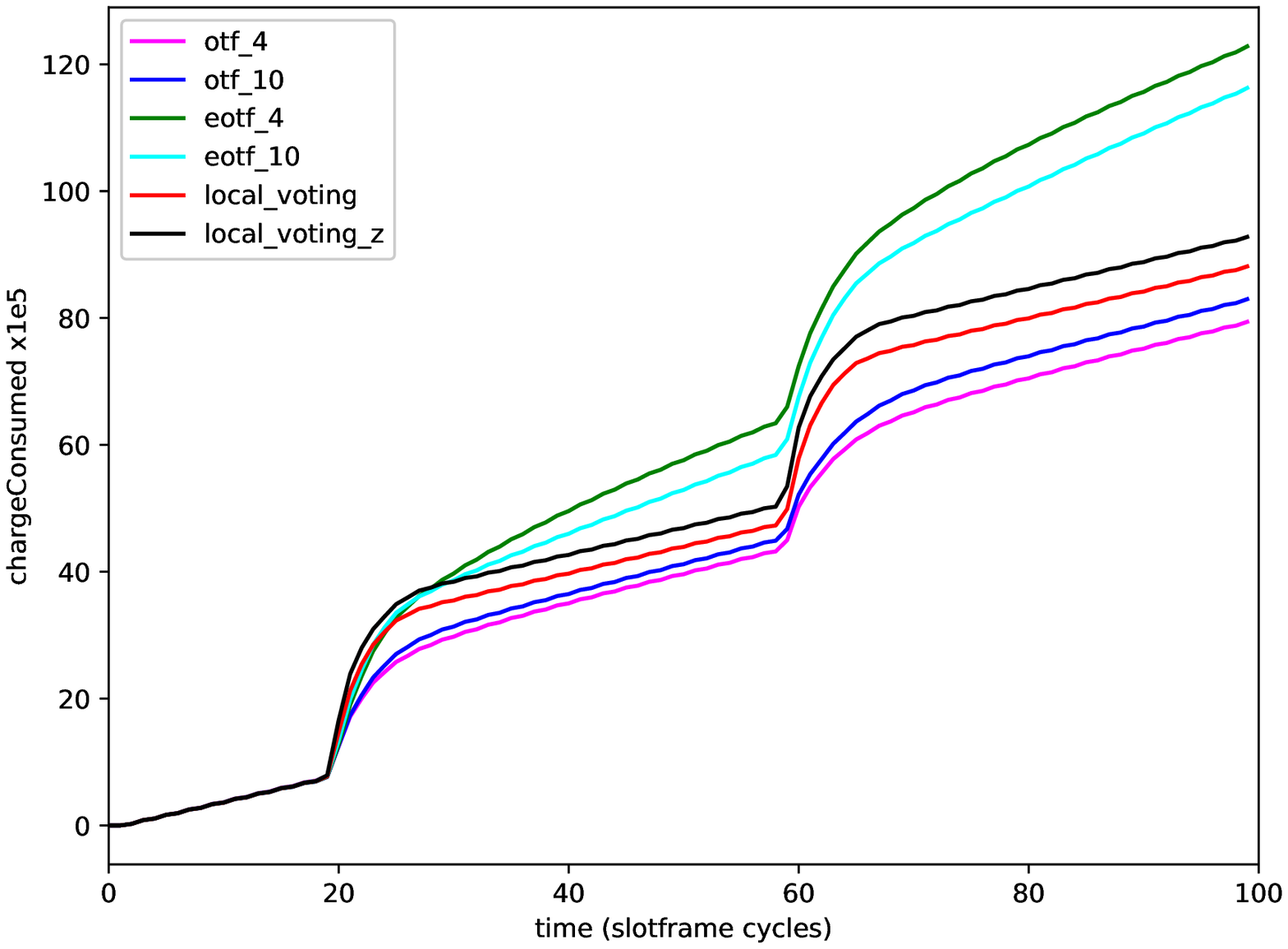}
  %\vspace{-0.1cm}
  \caption{Evolution of the energy consumption over time, 80 packets per burst}
  %\vspace{-0.5cm}
  \label{chargeConsumed_vs_time_buf_100_par_3_pkt_80}
\end{figure}

\begin{figure}
  \centering
  \includegraphics[width=0.9\textwidth]{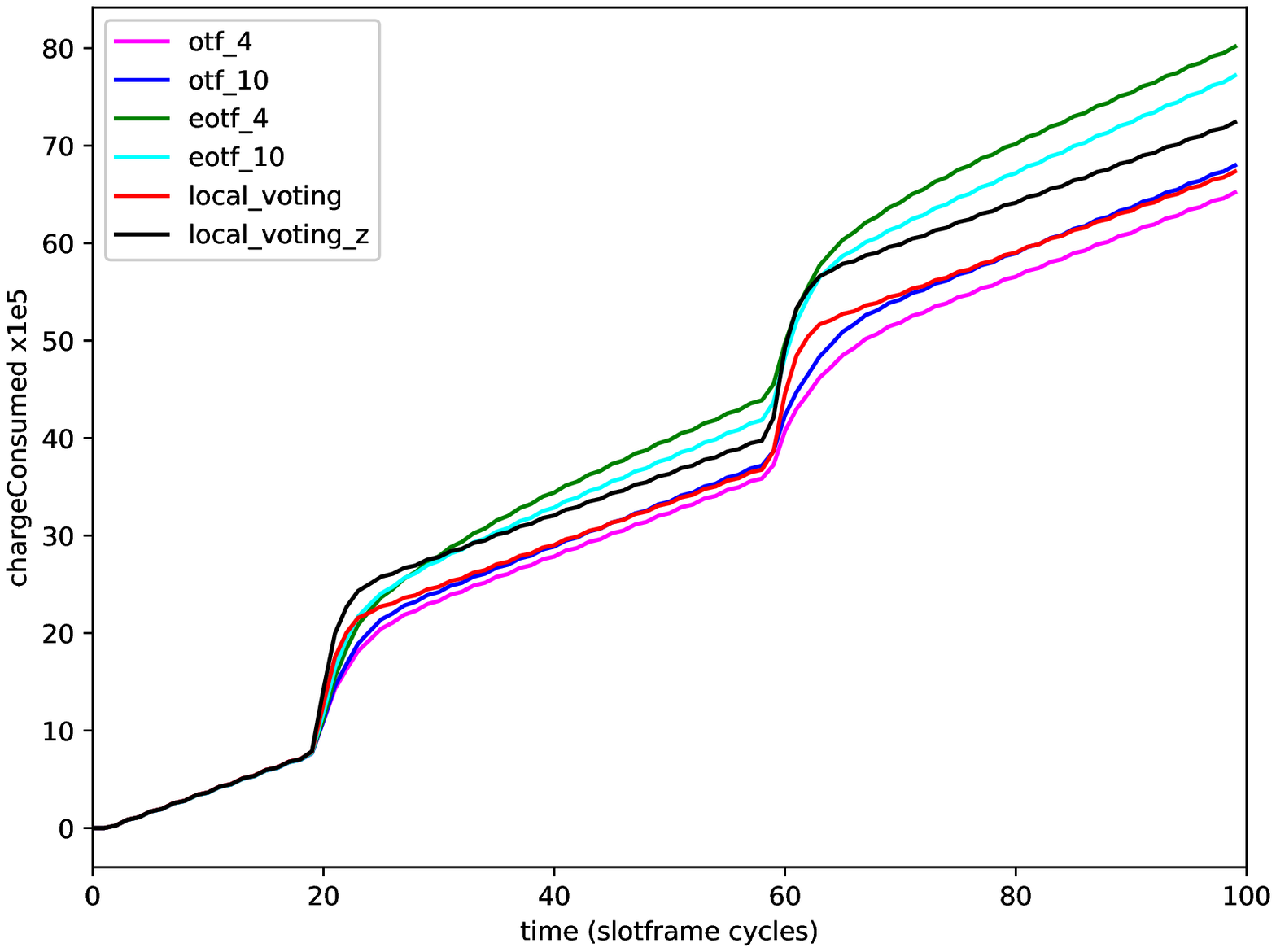}
  %\vspace{-0.1cm}
  \caption{Evolution of the energy consumption over time, 50 packets per burst}
  %\vspace{-0.5cm}
  \label{chargeConsumed_vs_time_buf_100_par_3_pkt_50}
\end{figure}

\begin{figure}
  \centering
  \includegraphics[width=0.9\textwidth]{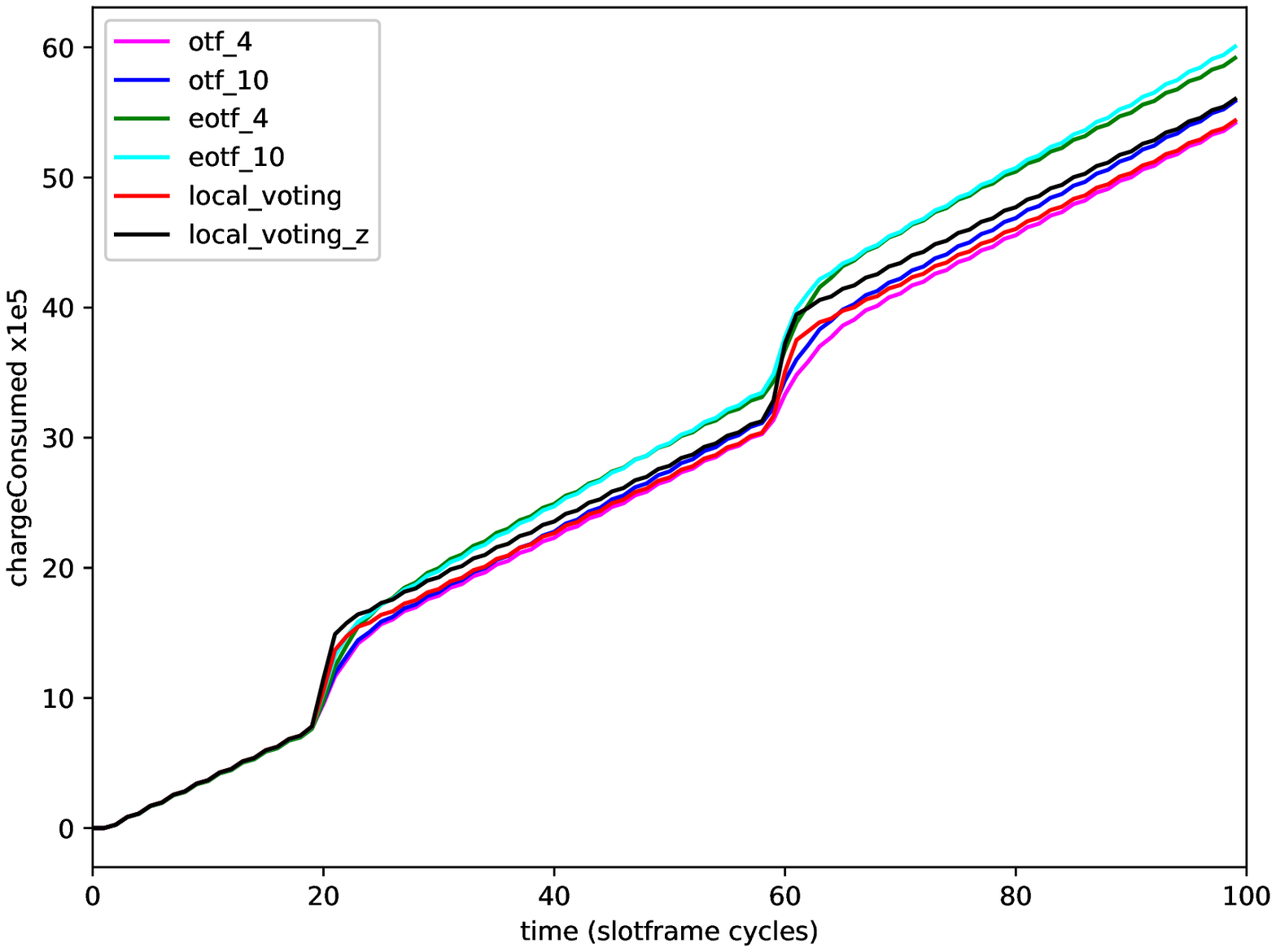}
  %\vspace{-0.1cm}
  \caption{Evolution of the energy consumption over time, 25 packets per burst}
  %\vspace{-0.5cm}
  \label{chargeConsumed_vs_time_buf_100_par_3_pkt_25}
\end{figure}

\begin{figure}
  \centering
  \includegraphics[width=0.9\textwidth]{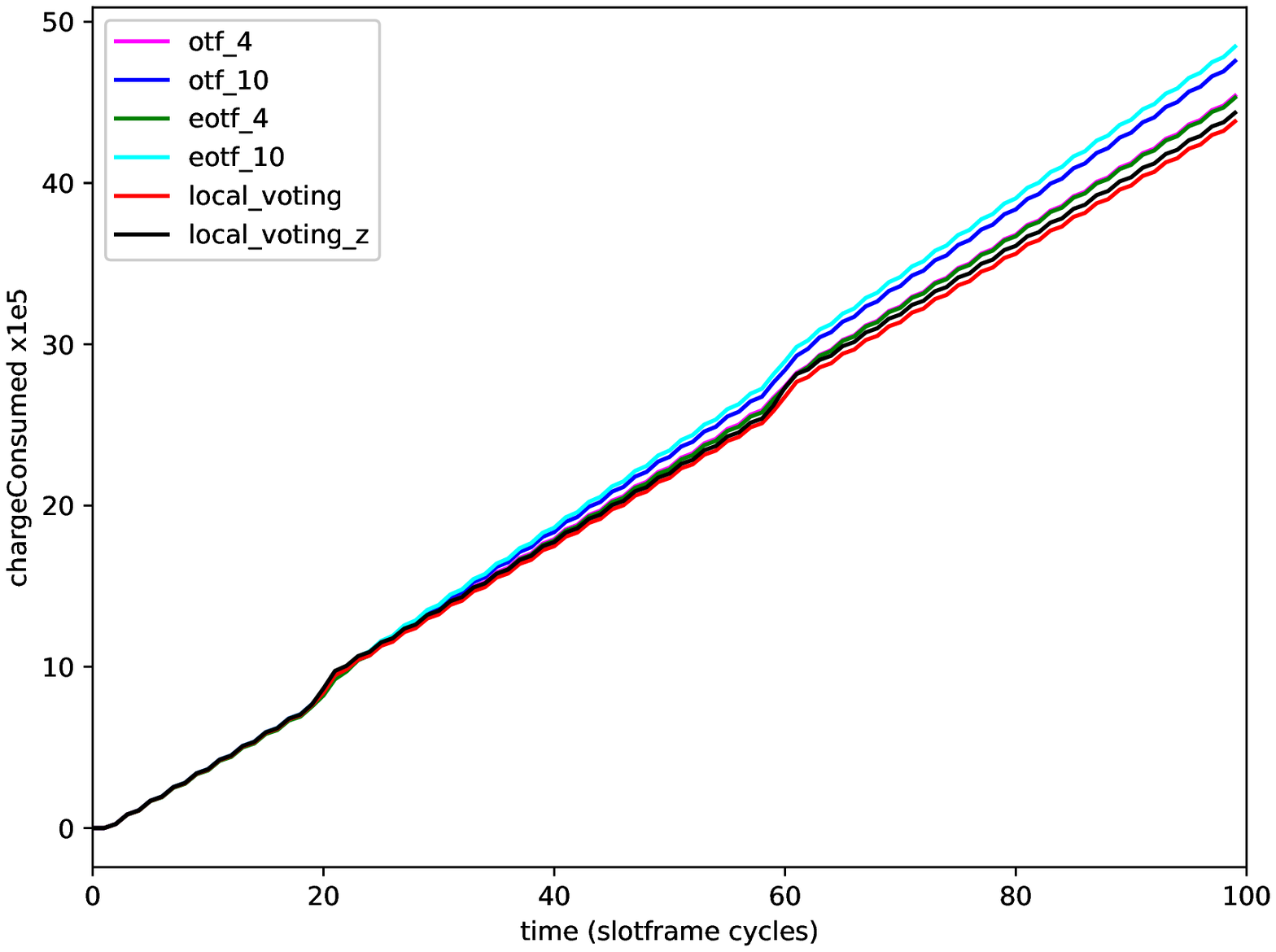}
  %\vspace{-0.1cm}
  \caption{Evolution of the energy consumption over time, 5 packets per burst}
  %\vspace{-0.5cm}
  \label{chargeConsumed_vs_time_buf_100_par_3_pkt_5}
\end{figure}

Fig.~\ref{txQueueFill_vs_threshold_buf_100} shows the average queue sizes among
all nodes in the network during the entire simulation, for each algorithm and
for each scenario.
A more detailed view is available in
Figs.~\ref{txQueueFill_vs_time_buf_100_par_3_pkt_80}
--\ref{txQueueFill_vs_time_buf_100_par_3_pkt_25}, where it is evident that the
increased efficiency of Local Voting makes the queue sizes to be reduced more
rapidly and all of the packets to reach their destinations faster.
In all cases the Local Voting algorithm achieves the smallest queue sizes,
which is the reason that it exhibits lower delay than the other algorithms.
This smaller queue size is also the reason behind the increased reliability of
the Local Voting algorithm compared to OTF and E-OTF.

\begin{figure}
  \centering
  \includegraphics[width=0.9\textwidth]{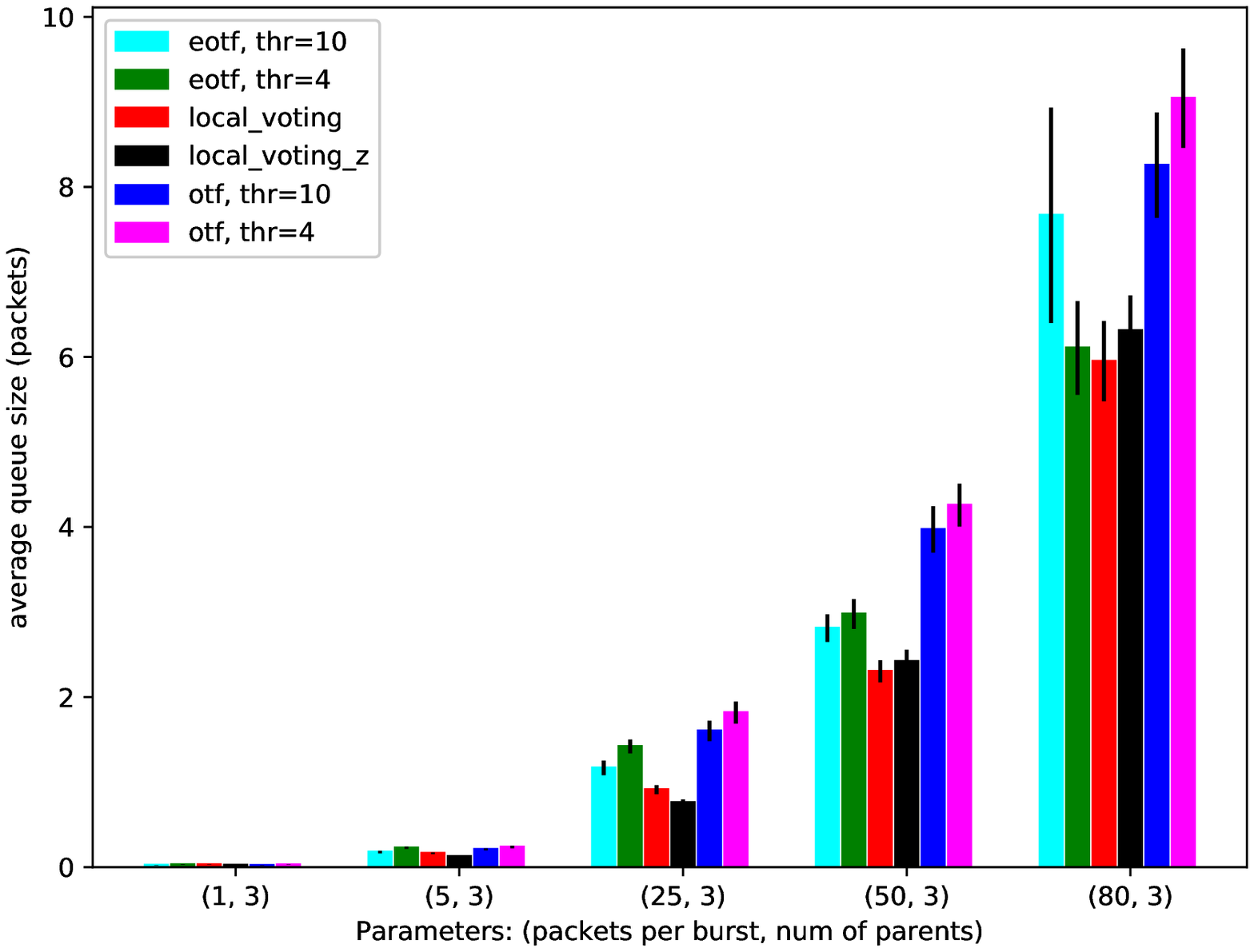}
  %\vspace{-0.1cm}
  \caption{Average queue size (packets) for the entire simulation.}
  %\vspace{-0.5cm}
  \label{txQueueFill_vs_threshold_buf_100}
\end{figure}

\begin{figure}
  \centering
  \includegraphics[width=0.9\textwidth]{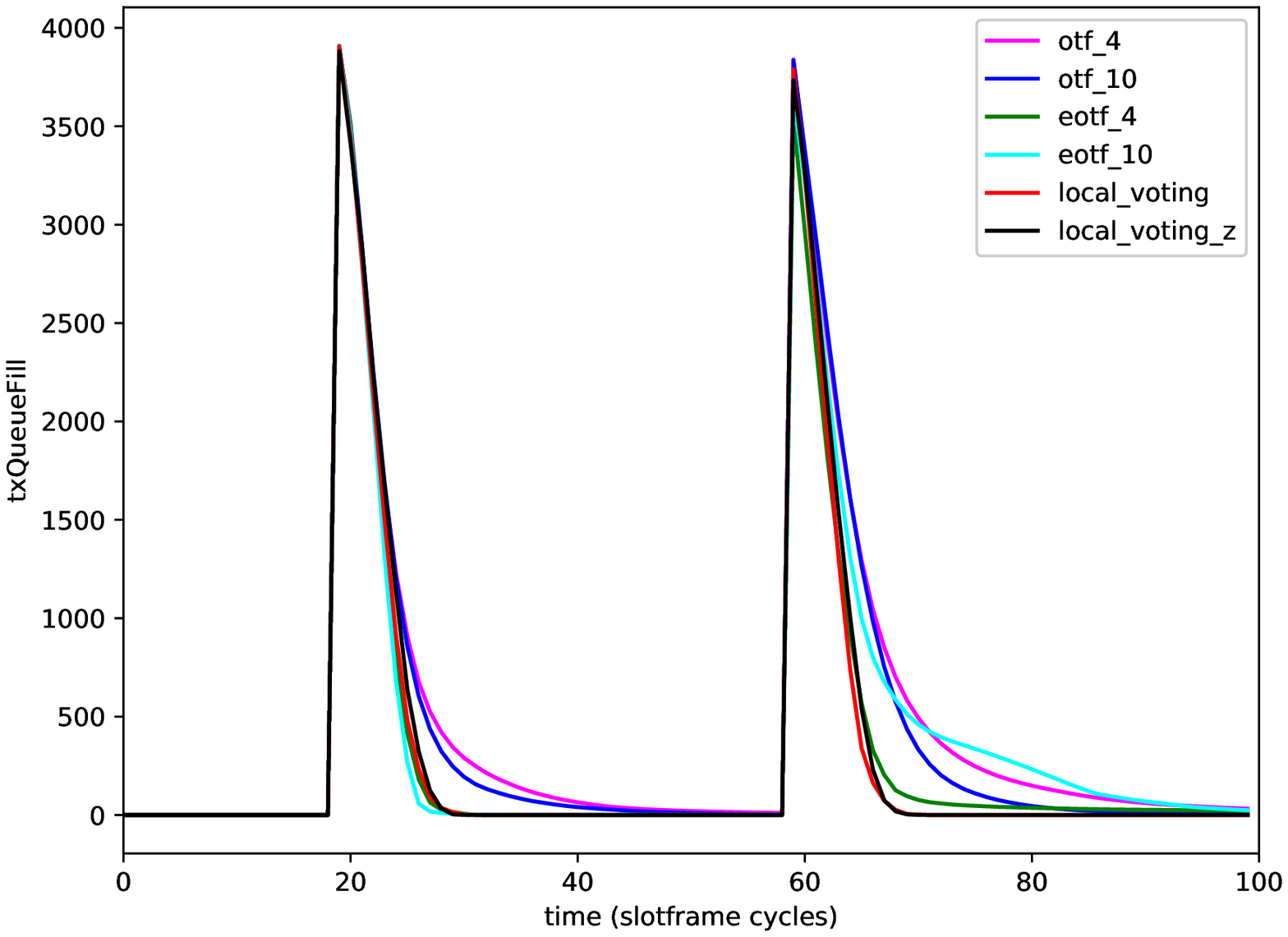}
  %\vspace{-0.1cm}
  \caption{Evolution over time, 80 packets per burst.}
  %\vspace{-0.5cm}
  \label{txQueueFill_vs_time_buf_100_par_3_pkt_80}
\end{figure}

\begin{figure}
  \centering
  \includegraphics[width=0.9\textwidth]{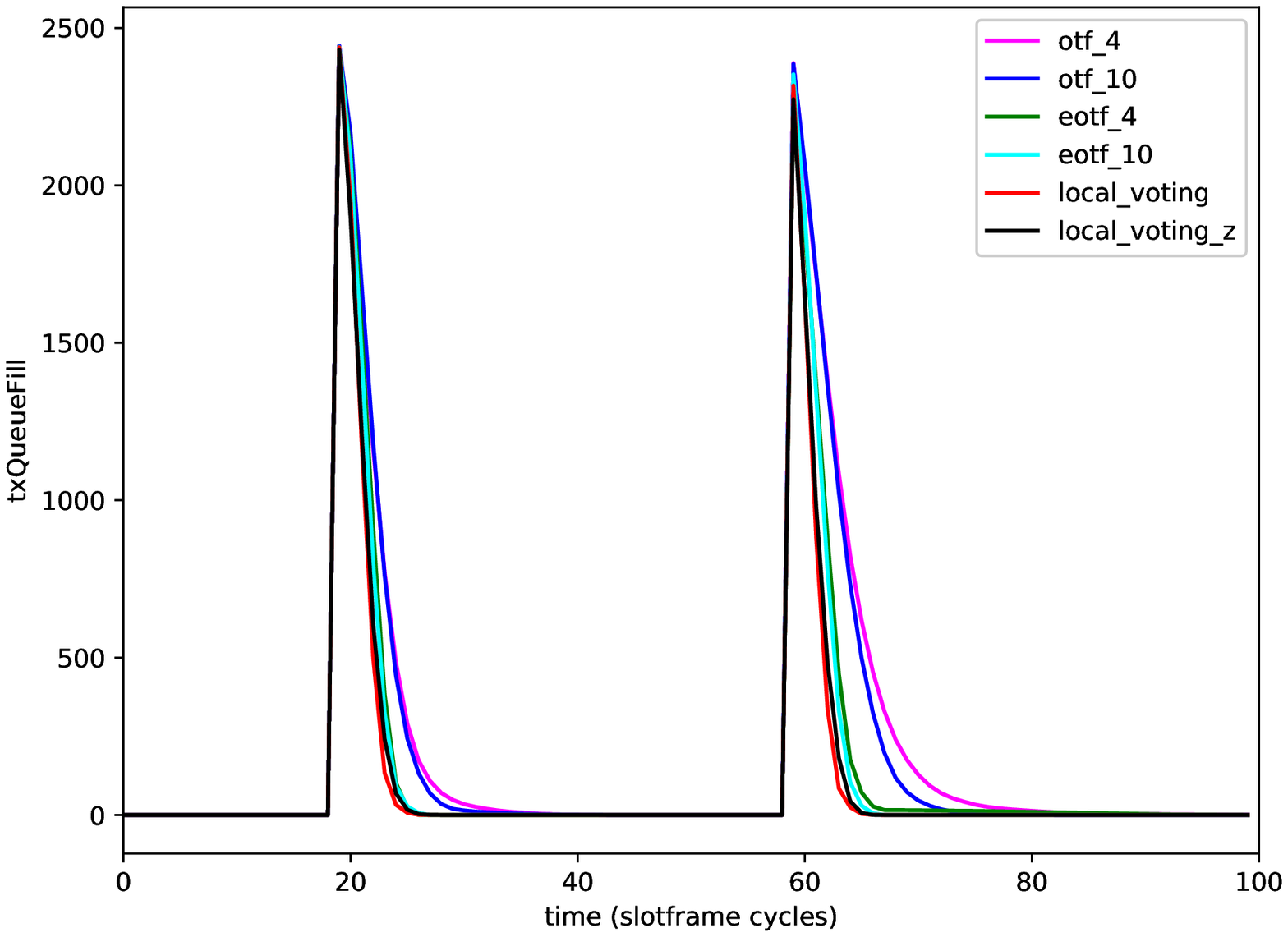}
  %\vspace{-0.1cm}
  \caption{Evolution over time, 50 packets per burst.}
  %\vspace{-0.5cm}
  \label{txQueueFill_vs_time_buf_100_par_3_pkt_50}
\end{figure}

\begin{figure}
  \centering
  \includegraphics[width=0.9\textwidth]{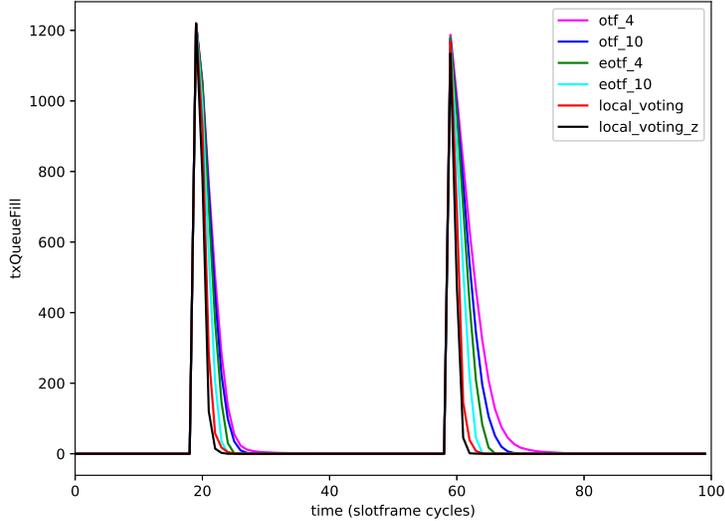}
  %\vspace{-0.1cm}
  \caption{Evolution over time, 25 packets per burst.}
  %\vspace{-0.5cm}
  \label{txQueueFill_vs_time_buf_100_par_3_pkt_25}
\end{figure}

In Fig.~\ref{LoadAllJain_vs_threshold_buf_100}--%
\ref{LoadAllG_vs_time_buf_100_par_3_pkt_80}, we can see the average fairness between
the nodes in the network, calculated on the load of each node (i.e.\ the ratio
of queue length over slot allocation), using two fairness
metrics, namely Jain's fairness index and the G fairness index.
The local voting algorithm has the best fairness in terms of load, which is
expected, since by design it tries to equalize the load throughout the
congested areas of the network.

\begin{figure}
  \centering
  \includegraphics[width=0.9\textwidth]{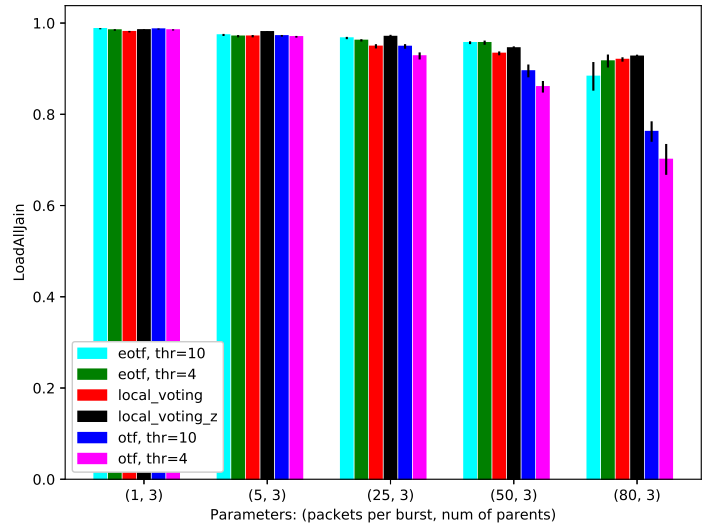}
  %\vspace{-0.1cm}
  \caption{Fairness in load distribution, with Jain's fairness index, average.}
  %\vspace{-0.5cm}
  \label{LoadAllJain_vs_threshold_buf_100}
\end{figure}

\begin{figure}
  \centering
  \includegraphics[width=0.9\textwidth]{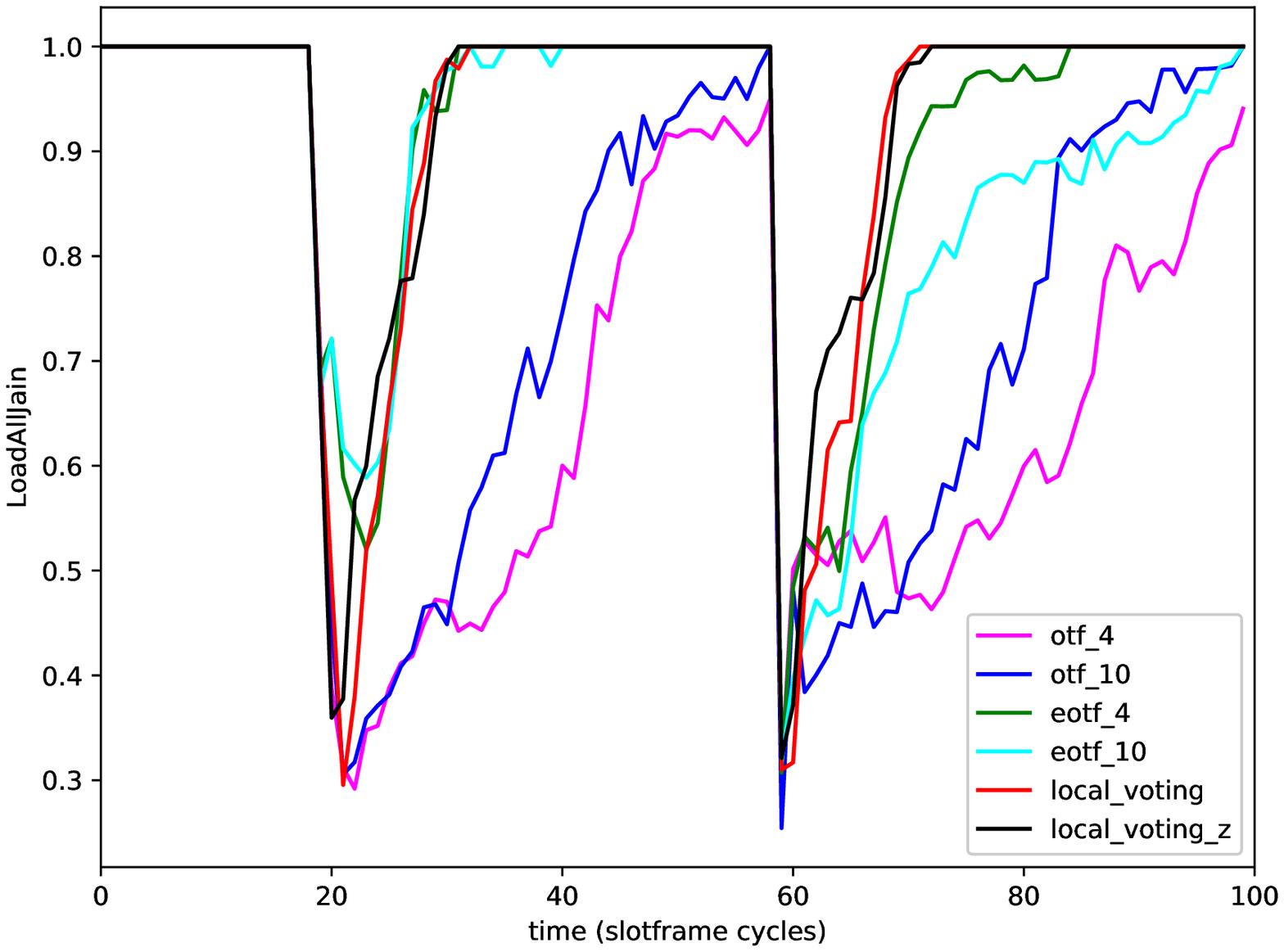}
  %\vspace{-0.1cm}
  \caption{Fairness in load distribution, with Jain's fairness index, over time, 80 packets per burst.}
  %\vspace{-0.5cm}
  \label{LoadAllJain_vs_time_buf_100_par_3_pkt_80}
\end{figure}

\begin{figure}
  \centering
  \includegraphics[width=0.9\textwidth]{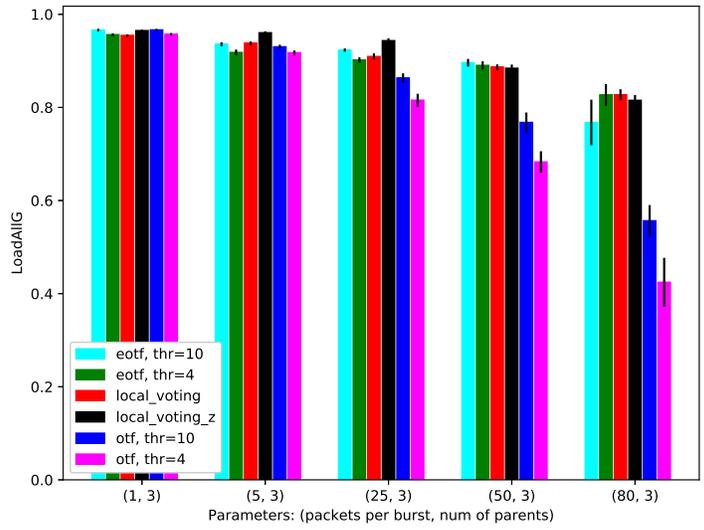}
  %\vspace{-0.1cm}
  \caption{Fairness in load distribution, with G fairness index, average.}
  %\vspace{-0.5cm}
  \label{LoadAllG_vs_threshold_buf_100}
\end{figure}

\begin{figure}
  \centering
  \includegraphics[width=0.9\textwidth]{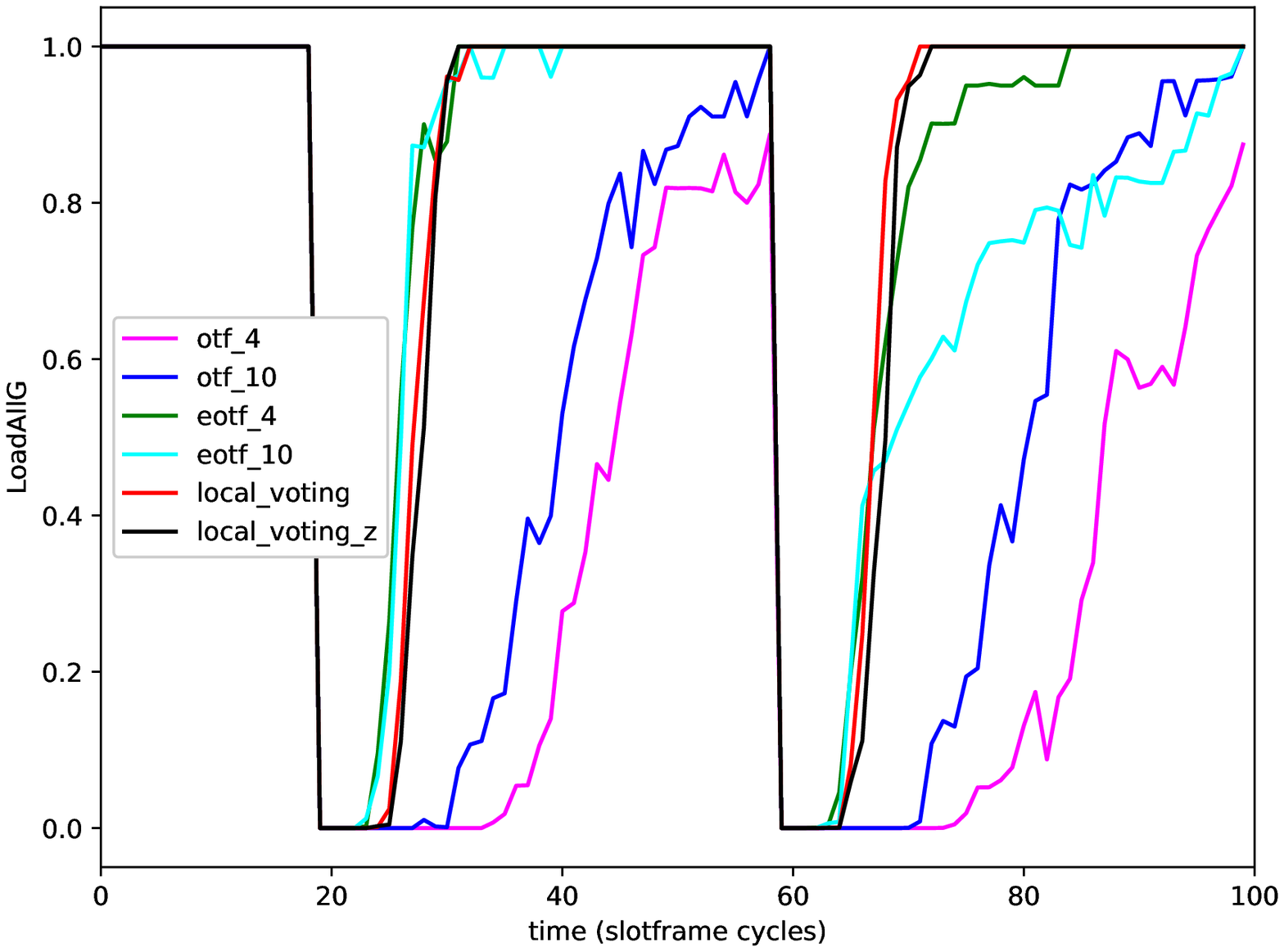}
  %\vspace{-0.1cm}
  \caption{Fairness in load distribution, with G fairness index, over time, 80 packets per burst.}
  %\vspace{-0.5cm}
  \label{LoadAllG_vs_time_buf_100_par_3_pkt_80}
\end{figure}

\subsection{Uniform traffic experiments}

This subsections contains the results of the uniform traffic experiment, where the nodes
transmit at a constant rate, with some variability in the traffic generation time to
avoid synchronization issues.

In Fig.~\ref{steady_max_latency_vs_threshold_buf_100} we can see the maximum latency
for each scenario.
We can see that in this scenario the advantage of local voting z over
the previous version of local voting in terms of maximal delay.
Specifically, local voting z
has the smallest maximum delay compared to all the other algorithms. This can be explained,
since for local voting a large queue is necessary for increasing the slot allocation, whereas
in the case of local voting z, the slot allocation also tracks the new packets that are
generated at each round, so that the buffer-bloat problem can be avoided.
In a sense local
voting z considers the ongoing rate of traffic that must be delivered, in addition to the
current buffer size, whereas local voting (without z) only considers the buffer.
The difference is even more apparent in Fig.~\ref{steady_latency_vs_threshold_buf_100},
where the average latency is depicted.

\begin{figure}
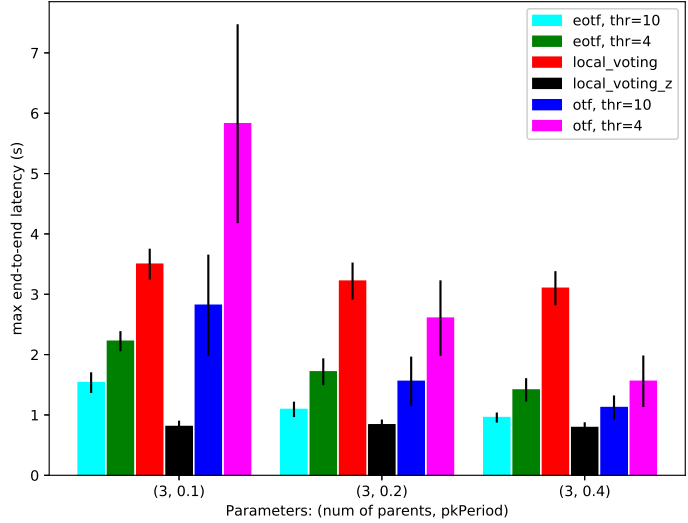

  \centering
  \includegraphics[width=0.9\textwidth]{{{steady_max_latency_vs_threshold_buf_100}}}
  %\vspace{-0.1cm}
  \caption{The maximum average latency for the different scenarios.}
  %\vspace{-0.5cm}
  \label{steady_max_latency_vs_threshold_buf_100}
\end{figure}

\begin{figure}
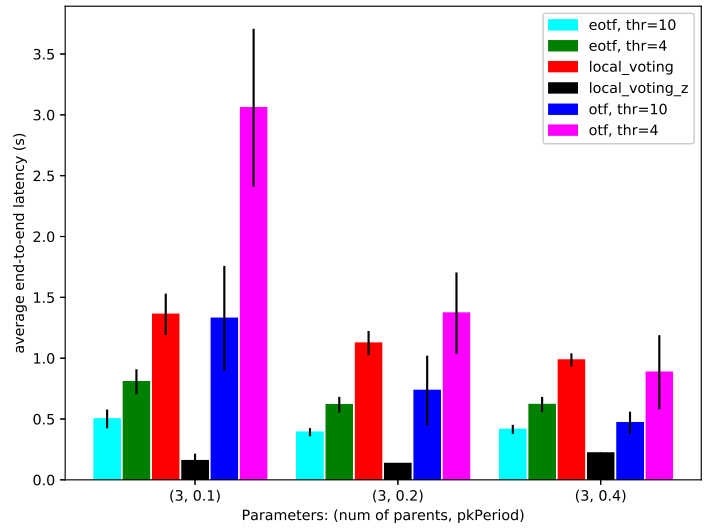

  \centering
  \includegraphics[width=0.9\textwidth]{{{steady_latency_vs_threshold_buf_100}}}
  %\vspace{-0.1cm}
  \caption{The average latency for the different scenarios.}
  %\vspace{-0.5cm}
  \label{steady_latency_vs_threshold_buf_100}
\end{figure}

The evolution over the latency over time may be seen in Fig.\ref{steady_latency_vs_time_buf_100_par_3_pkt_1_per_0.1}.
Similar results are available for the other scenarios as well.

\begin{figure}
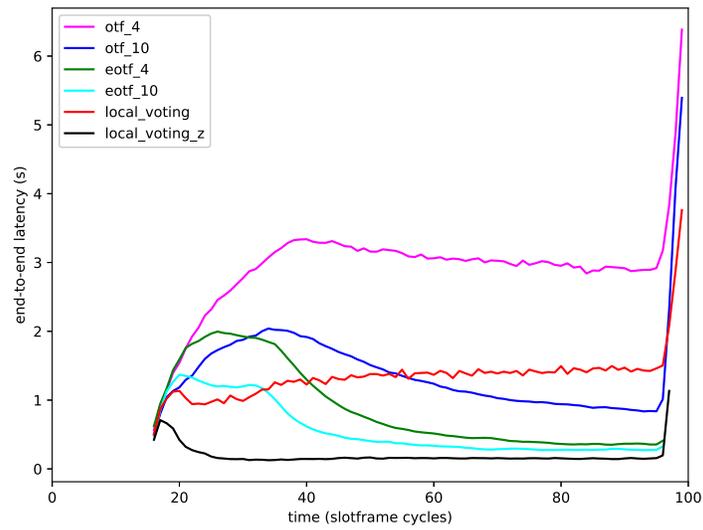

  \centering
  \includegraphics[width=0.9\textwidth]{{{steady_latency_vs_time_buf_100_par_3_pkt_1_per_0.1}}}
  %\vspace{-0.1cm}
  \caption{The average latency over time for an packet inter-arrival time of 0.1 seconds.}
  %\vspace{-0.5cm}
  \label{steady_latency_vs_time_buf_100_par_3_pkt_1_per_0.1}
\end{figure}

However, the improved performance in terms of latency comes at a cost of larger energy
consumption (Fig.~\ref{steady_chargeConsumedPerRecv_vs_threshold_buf_100},~\ref{steady_chargeConsumed_vs_threshold_buf_100}).

\begin{figure}
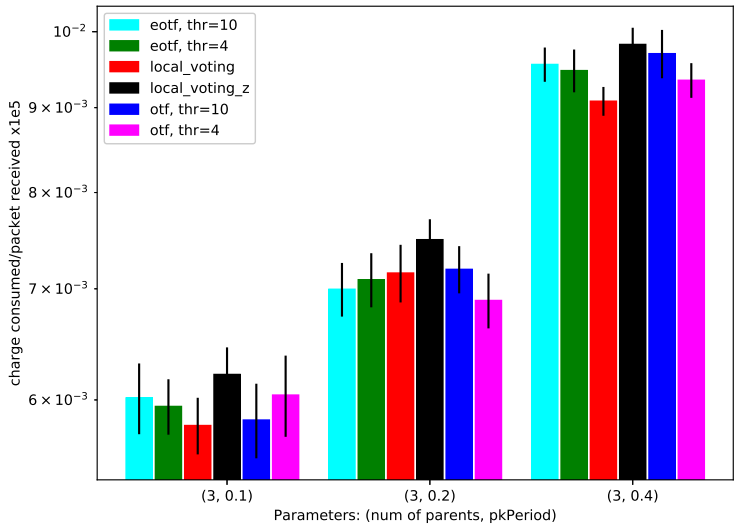

  \centering
  \includegraphics[width=0.9\textwidth]{{{steady_chargeConsumedPerRecv_vs_threshold_buf_100}}}
  %\vspace{-0.1cm}
  \caption{The charge consumed per received packet for the different scenarios.}
  %\vspace{-0.5cm}
  \label{steady_chargeConsumedPerRecv_vs_threshold_buf_100}
\end{figure}

\begin{figure}
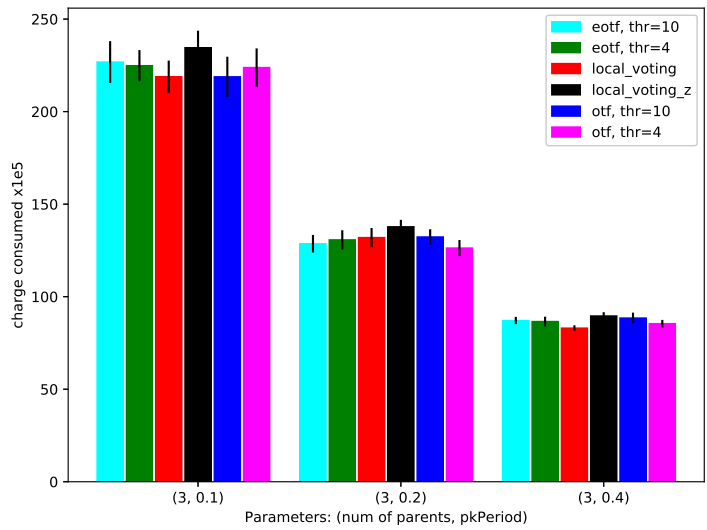

  \centering
  \includegraphics[width=0.9\textwidth]{{{steady_chargeConsumed_vs_threshold_buf_100}}}
  %\vspace{-0.1cm}
  \caption{The total charge consumed for the different scenarios.}
  %\vspace{-0.5cm}
  \label{steady_chargeConsumed_vs_threshold_buf_100}
\end{figure}

The reliability of all algorithms except OTF was perfect in all cases (Fig.~\ref{steady_reliability_vs_threshold_buf_100}.

\begin{figure}
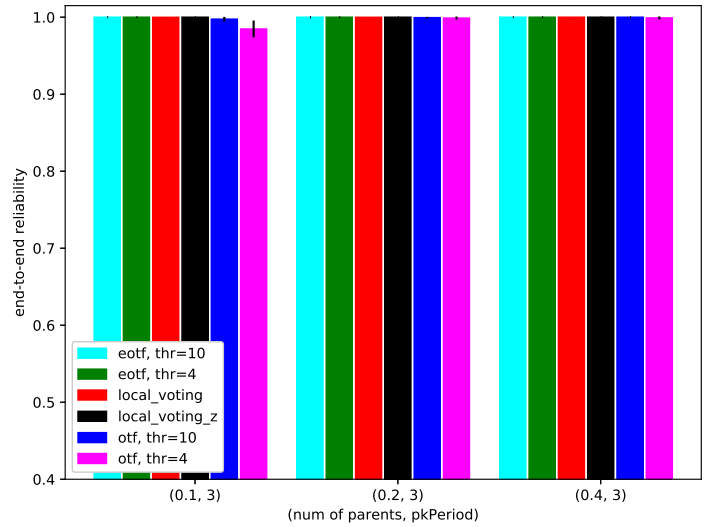

  \centering
  \includegraphics[width=0.9\textwidth]{{{steady_reliability_vs_threshold_buf_100}}}
  %\vspace{-0.1cm}
  \caption{The reliability (ratio of generated packets that reach their destination) for the different scenarios.}
  %\vspace{-0.5cm}
  \label{steady_reliability_vs_threshold_buf_100}
\end{figure}

The lower delay of the local voting z algorithm can be easily explained by seeing Fig.~\ref{steady_max_txQueueFill_vs_threshold_buf_100},
Fig.~\ref{steady_txQueueFill_vs_threshold_buf_100}, and Fig.~\ref{steady_txQueueFill_vs_time_buf_100_par_3_pkt_1_per_0.1},
where it is apparent that the queue sizes
are much smaller for local voting z, resulting in the reduction of the latency.

\begin{figure}
  \centering
  \includegraphics[width=0.9\textwidth]{{{steady_max_txQueueFill_vs_threshold_buf_100}}}
  %\vspace{-0.1cm}
  \caption{The maximum average queue size for the different scenarios.}
  %\vspace{-0.5cm}
  \label{steady_max_txQueueFill_vs_threshold_buf_100}
\end{figure}

\begin{figure}
  \centering
  \includegraphics[width=0.9\textwidth]{{{steady_txQueueFill_vs_threshold_buf_100}}}
  %\vspace{-0.1cm}
  \caption{The average queue size for the different scenarios.}
  %\vspace{-0.5cm}
  \label{steady_txQueueFill_vs_threshold_buf_100}
\end{figure}

\begin{figure}
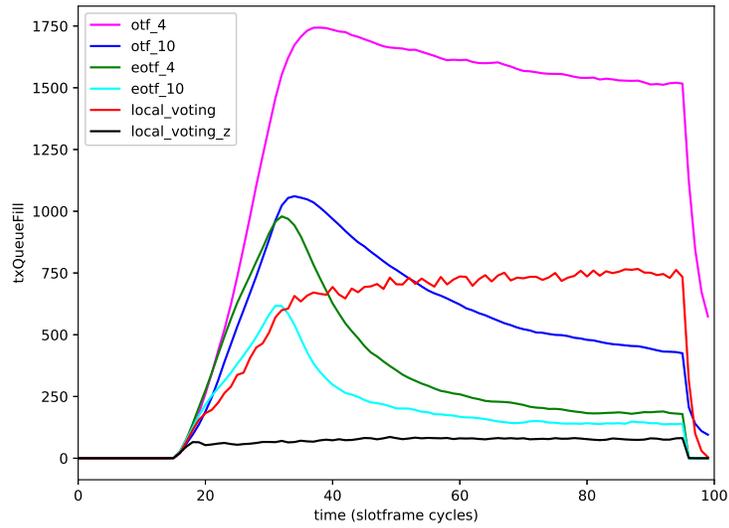

  \centering
  \includegraphics[width=0.9\textwidth]{{{steady_txQueueFill_vs_time_buf_100_par_3_pkt_1_per_0.1}}}
  %\vspace{-0.1cm}
  \caption{The average queue size over time for an packet inter-arrival time of 0.1 seconds.}
  %\vspace{-0.5cm}
  \label{steady_txQueueFill_vs_time_buf_100_par_3_pkt_1_per_0.1}
\end{figure}

Additional results are available at github repository\footnote{%
  As an online addition to this article, the source code is
  available at \url{https://github.com/djvergad/local_voting_tsch}
}, that are omitted in this
paper due to space limitations.

\section{Conclusions}
\label{conc}
We proposed a new distributed bandwidth reservation algorithm called
\emph{Local Voting} which balances the load between links in 6TiSCH networks.
The algorithm calculates the number of cells to be added or released by 6top
while considering the collision-free constraints.
In this way, it adapts the schedule to the network conditions in 6TiSCH
networks, equalizes the load in congested areas, that as expected provides
efficient resource allocation.
We showed that optimal schedules are maximal and balanced, and these are the
two design goals of LV\@.
Extensive simulation results show that LV combines the good delay performance
of E-OTF and the energy efficiency of OTF, while outperforming them in terms of
reliability and fairness.
To summarize, we proved the advantage of load balancing when performing link
scheduling in 6TiSCH networks, proposed Local Voting for distributed bandwidth
reservation in 6TiSCH networks, and we demonstrated by simulations that the
Local Voting algorithm shows an overall very good performance in comparison
with other state-of-the-art algorithms.
In addition, the new variant of the algorithm, named \emph{local voting z},
achieves lower latency that the other algorithms we compared with, even
in the steady-state scenario.

\section*{References}

\bibliography{refer}

\end{document}